\newtheorem{thm}{Theorem}
\newtheorem{defn}{Definition}
\newtheorem{lemma}{Lemma}
\newtheorem{pro}{Proposition}
\newtheorem{rk}{Remark}
\newtheorem{cor}{Corollary}
\numberwithin{equation}{section} \setcounter{tocdepth}{1}
\newcommand{\bea}{\begin{eqnarray}}
	\newcommand{\eea}{\end{eqnarray}}
\newcommand{\Z}{\mathbb{Z}}
\def\Z{\mathbb{Z}}
\begin{document}
	\title [Gibbs measures for HC-model ]
	{Gibbs measures for HC-model with a countable set of spin values on a Cayley tree}
	
	\author {R.M. Khakimov, M.T. Makhammadaliev,  U.A. Rozikov}
	
	\
	\address{R.M. Khakimov, M.T. Makhammadaliev$^{a,b}$
		\begin{itemize}
			\item[$^a$] V.I.Romanovskiy Institute of Mathematics,  9, Universitet str., 100174, Tashkent, Uzbekistan;
			\item[$^b$] Namangan State  University, Namangan, Uzbekistan.
	\end{itemize}}
	\email{rustam-7102@rambler.ru, mmtmuxtor93@mail.ru}
	
	\address{ U.Rozikov$^{a,c,d}$
		\begin{itemize}
			\item[$^a$] V.I.Romanovskiy Institute of Mathematics,  9, Universitet str., 100174, Tashkent, Uzbekistan;
			\item[$^c$] AKFA University, National Park Street, Barkamol MFY,
			Mirzo-Ulugbek district, Tashkent, Uzbekistan;
			\item[$^d$] National University of Uzbekistan,  4, Universitet str., 100174, Tashkent, Uzbekistan.
	\end{itemize}}
	\email{rozikovu@yandex.ru}
	
	\begin{abstract}
	In this paper, we study the HC-model with a countable set $\mathbb Z$ of spin values on a Cayley tree
	 of order $k\geq 2$. This model is defined
by	a countable set of parameters (that is, the activity function $\lambda_i>0$, $i\in \mathbb Z$).
	A functional equation is obtained that provides the consistency condition for 
	finite-dimensional Gibbs distributions. Analyzing this equation,
	the following results are obtained:
	\begin{itemize}
		\item[-] Let $\Lambda=\sum_i\lambda_i$.
		For $\Lambda=+\infty$ there are no translation-invariant Gibbs measures (TIGM) and no two-periodic Gibbs measures (TPGM);
		\item[-] For $\Lambda<+\infty$, the uniqueness of TIGM is proved;
		\item[-] Let $\Lambda_{\rm cr}(k)=\frac{k^k}{(k-1)^{k+1}}$. If $0<\Lambda\leq\Lambda_{\rm cr}$, then there is exactly one TPGM that is TIGM;
		\item[-] For $\Lambda>\Lambda_{\rm cr}$, there are exactly three TPGMs, one of which is TIGM.
	\end{itemize}
		
	\end{abstract}
	\maketitle
	
	{\bf Mathematics Subject Classifications (2010).} 82B26 (primary);
	60K35 (secondary)
	
	{\bf{Key words.}} {\em HC model, configuration, Cayley tree,
		Gibbs measure, boundary law}.
	
	\section{Introduction}

 The theory of Gibbs measures is well developed in many classical models from physics (for example, the Ising model, the Potts model, the HC model), when the set of spin values is a finite set. It is known that each Gibbs measure corresponds to one phase of the physical system.
Therefore, in the theory of Gibbs measures, one of the important problems is the existence
and non-uniqueness of Gibbs measures. The non-uniqueness means that the physical system has a coexistence of its phases (state)
at a fixed temperature (see  \cite{FV},  \cite{6}, \cite{Pr}, \cite{R}, \cite{Si}).

In the case of models with a finite set of spin values, the set of all limit Gibbs measures for a given Hamiltonian forms a non-empty, convex, compact subset in the set of all probability measures. In the setting of gradient interface models
in general no Gibbs measure exists \cite{Ve}. Therefore one often
considers gradient Gibbs measures \cite{FS}.

There are papers devoted to the study of (gradient) Gibbs measures for models with an infinite set of spin values. In particular, \cite{GR} shows the uniqueness of the translation-invariant Gibbs measure for the antiferromagnetic Potts model with a countable number of states and a nonzero external field, and \cite{G} describes the Poisson measures, which are Gibbs measures. 

The work \cite{Z} is devoted to the study of Gibbs measures for models of gradient type. 

In \cite{HKR} the existence of several transnational invariant gradient Gibbs measures for the SOS model with a countable set  of spin values on the Cayley tree is proven. And also the class of 4-periodic Gibbs gradient measures is described. Recently, in \cite{HK} (gradient) Gibbs measures for a gradient-type model are studied and the existence of a countable set of ordinary Gibbs measures is shown, and conditions for the existence of gradient Gibbs measures that are not ordinary Gibbs measures are found. In \cite{B} the existence of a relationship between Gibbs measures and Gibbs gradient measures is shown. In \cite{HK1}  the existence of gradient Gibbs measures that are not translation-invariant is proved.

In this paper, we study HC-model with a countable set of spin values. Such HC-models are interesting in statistical mechanics, combinatorics and theory of 
neural networks \cite{bhw}, \cite {6},  \cite{Kf}, \cite{Maz}. Many papers are devoted to the study of limit Gibbs measures for HC-models with a finite set of spin values (see, for example, \cite{RM}, \cite{R} and the references therein).

Here for HC-model with a countable set of spin values $\mathbb Z$ on a Cayley tree of arbitrary order
we will find some conditions for the existence of TIMG,
and also prove the uniqueness of such a measure under the existence condition. Besides,
for the model under consideration, two-periodic Gibbs measures  are studied.

The exact value $\Lambda_{\rm cr}$  of the parameter $\Lambda$ is found, (where $\Lambda$ is the sum of the series obtained
from the sequence of parameters $\{\lambda_j\}_{j\in \mathbb Z}$), such that for $0<\Lambda\leq\Lambda_{\rm cr}$
there is exactly one periodic Gibbs measure which is translation invariant,
and for $\Lambda>\Lambda_{\rm cr}$ there are exactly three periodic Gibbs measures, one of which is translation invariant.

\section{Preliminaries}

The Cayley tree $\Gamma^k$
of order $ k\geq 1 $ is an infinite tree,
i.e., a graph without cycles, such that exactly $k+1$ edges
originate from each vertex. Let $\Gamma^k=(V,L,i)$, where $V$ is the
set of vertices $\Gamma^k$, $L$ is the set of edges and $i$ is the
incidence function setting each edge $l\in L$ into correspondence
with its endpoints $x, y \in V$. If $i (l) = \{ x, y \} $, then
the vertices $x$ and $y$ are called the {\it nearest neighbors},
denoted by $l = \langle x, y \rangle $.

For a fixed point $x^0\in V$,
$$W_n=\{x\in V\,| \, d(x,x^0)=n\}, \qquad V_n=\bigcup_{m=0}^n W_m, \qquad L_n=\{\langle x,y\rangle\in L| \, x,y\in V_n\},$$
where $d(x,y)$ is the distance between
vertices $x$ and $y$ on a Cayley tree, i.e.,
the number of edges of the shortest path connecting  $x$ and $y$.

Write  $x\prec y$, if the path from  $x^0$ to $y$ goes through $x$.
Call vertex $y$ a direct successor of $x$ if $y\succ x$ and $x,y$
are nearest neighbors. Note that in $\Gamma^k$ any vertex $x\neq x^0$
has $k$ direct successors and $x^0$ has $k+1$ direct successors. Denote by
$S(x)$ the set of direct successors of $x$, i.e. if $x\in W_n$, then
$$S(x)=\{y_i\in W_{n+1} \, |\,  d(x,y_i)=1, i=1,2,\ldots, k \}.$$

 We consider the Hard-Core (HC) model with a countable set of spin values in which the spin variables take values in the set $\mathbb{Z}$, and are located at the tree vertices. A configuration $\sigma = \{\sigma(x) \, |\, x \in  V \}$  is then defined as a function $\sigma=\{\sigma(x)\in \mathbb Z: x\in V\}$. In this model, each vertex $x$ is assigned one of the values $\sigma (x)\in \mathbb Z$, where $\mathbb{Z}$ is the set of integers. The values $\sigma (x)\neq0$ mean that the vertex $x$ is `occupied', and $\sigma (x)=0$ means that $x$ is `vacant'.

We consider the set $\mathbb{Z}$ as the set of vertices of a graph $G$.
We use the graph $G$  to define a $G$-admissible configuration as follows.
A configuration $\sigma$ is called a
$G$-\textit{admissible configuration} on the Cayley tree (in $V_n$), if $\{\sigma (x), \, \sigma (y)\}$ is the edge of the graph $G$
for any pair of nearest neighbors $x,y$ in $V$ (in $V_n$). We
let $\Omega^G$ ($\Omega_{V_n}^G$) denote the set of $G$-admissible configurations.

The activity set \cite{bw} for a graph $G$ is the bounded function $\lambda:G
\to R_+$ from the set $G$  to the set of positive real
numbers. The value $\lambda_i$ of the function $\lambda$ at the vertex
$i \in \mathbb{Z}$ is called the vertex activity.

For given $G$ and $\lambda$ we define the Hamiltonian of the $G-$HC model as
\begin{equation}\label{H} H^{\lambda}_{G}(\sigma)=\left\{%
\begin{array}{ll}
    J \sum\limits_{x\in{V}}{\ln\lambda_{\sigma(x)},} \ \ \ $ if $ \sigma \in\Omega^G $,$ \\
   +\infty ,\  \  \  \ \ \ \ \ \ \ \ \ \ $  \ if $ \sigma \ \notin \Omega^G $,$ \\
\end{array}
\right.
\end{equation}
where $J\in \mathbb R$.

%
%


For nearest-neighboring interaction  potential $\Phi=(\Phi_b)_b$, where
$b=\langle x,y \rangle$ is an edge,  define symmetric transfer matrices $Q_b$ by
\begin{equation}\label{Qd}
	Q_b(\omega_b) = e^{- \big(\Phi_b(\omega_b) + | \partial x|^{-1} \Phi_{\{x\}}(\omega(x)) + |\partial y |^{-1} \Phi_{\{y\}} (\omega(y)) \big)},
\end{equation}
where $\partial x$ is the set of all nearest-neighbors of $x$ and $|S|$ denotes the number of elements of the set $S$.

%
%

Define the Markov (Gibbsian) specification as
$$
\gamma_\Lambda^\Phi(\sigma_\Lambda = \omega_\Lambda | \omega) = (Z_\Lambda^\Phi)(\omega)^{-1} \prod_{b \cap \Lambda \neq \emptyset} Q_b(\omega_b).
$$


Let $L(G)$ be the set of edges of a graph $G$. We let $A\equiv A^G=\big(a_{ij}\big)_{i,j=0,1,2}$ denote the adjacency
matrix of the graph $G$, i.e.,

$$a_{ij}=a_{ij}^G=%
\begin{cases} 1 \ \ \mbox{if} \ \ \{i,j\}\in L(G), \\
0 \ \ \mbox{if} \ \ \{i,j\}\notin L(G).
\end{cases}
$$

\begin{rk} Since ${i,j}\in \mathbb{Z}$, then in cases where $j$ is a specific negative number, instead of $a_{ij}$ we will conventionally write $a_{ i,j}$.
\end{rk}
%
%

\begin{defn} (See \cite[Chapter 12]{6}, \cite{HK}) A family of vectors $l= \{l_{xy}\}_{\langle x, y \rangle \in L}$ with
$l_{xy}=\{l_{xy}(i):i \in \mathbb Z \} \in (0, \infty)^\mathbb Z$ is called the boundary law for the Hamiltonian (\ref{H}) if

1) for each $\langle x, y \rangle \in L $ there exists  a constant $c_{xy}> 0 $ such that the consistency equation
\begin{equation}\label{eq:bl}
l_{xy}(i) = c_{xy}\lambda_i\prod_{z \in \partial x \setminus \{y \}} \sum_{j \in \Z} a_{ij} l_{zx}(j)
\end{equation}
holds for every $ i \in \Z $, where $\partial x-$ the set of nearest neighbors of a vertex $x$.

2)  A boundary law $l$ is said to be {\em normalisable} if and only if
\begin{equation}\label{Norm}
\sum_{i \in \Z} \Big(\lambda_i \prod_{z \in \partial x} \sum_{j \in \Z} a_{ij} l_{zx}(j) \Big) < \infty
\end{equation} at any $x \in V$.

3) A boundary law 	is called {\em $q$-height-periodic} (or $q$-periodic) if $l_{xy} (i + q) = l_{xy}(i)$
	for every oriented edge $\langle x,y \rangle $ and each $i \in \Z$.
	
4) 	 A boundary law is called {\em translation invariant}  if it does not depend on edges of the tree, i.e., $l_{xy} (i) = l(i)$
for every oriented edge $\langle x,y \rangle $ and each $i \in \Z$.
\end{defn}
Assuming $l_{xy}(0)\equiv 1$ (normalization at $0$), from (\ref{eq:bl}) we get
\begin{equation}\label{ma}
l_{xy}(i) = {\lambda_i\over \lambda_0}\prod_{z \in \partial x \setminus \{y \}}{a_{i0}+ \sum_{j \in \Z_0} a_{ij} l_{zx}(j)\over a_{00}+\sum_{j \in \Z_0} a_{0j} l_{zx}(j)}.
\end{equation}

\begin{rk} \label{r2} We note that
	\begin{itemize}
\item[a.] There is an one-to-one correspondence between boundary laws
and Gibbs measures (i.e., tree-indexed Markov chains) if the boundary laws are  normalisable \cite{Z1} (see \cite[Theorem 3.5.]{HKR}).

\item[b.] In \cite{HKR} it is shown that a translation invariant boundary law $z\in \mathbb R_+^\infty$ satisfies the condition of normalisability, if $z\in l^{\frac{k+1}{k}}$.
\end{itemize}
\end{rk}

%

In this paper we consider the nearest-neighboring interaction  potential $\Phi=(\Phi_b)_b$, which corresponds to the HC model  (\ref{H}) and will study Gibbs measures of this model. By Remark \ref{r2} each normalisable boundary law $l$ defines a Gibbs measure. In this paper our aim is to find $1$-height-periodic and two-periodic boundary laws for the HC model for a specially chosen graph $G$ (see below). We show that these boundary laws will be normalisable and therefore define Gibbs measures.

We consider the graph $G$ with $a_{i0}=1$ for any $i\in \mathbb{Z}$ and $a_{im}=0$ for any $i, m \in \mathbb{Z}_0$ (see Fig.\ref{fi}). The corresponding admissible configuration satisfies the equality $\sigma (x)\sigma (y)=0$ for any $\langle x,y \rangle $ from $V$, i.e., if the vertex $x$ has the spin value $\sigma(x)=0$, then on neighboring vertices we can put any value from $\mathbb{Z}$, and   if the vertex $x$ contains any value from $\mathbb{Z}_0$, then on neighboring vertices we put only zeros.

\begin{figure}[h]
\begin{center}
   \includegraphics[width=13cm]{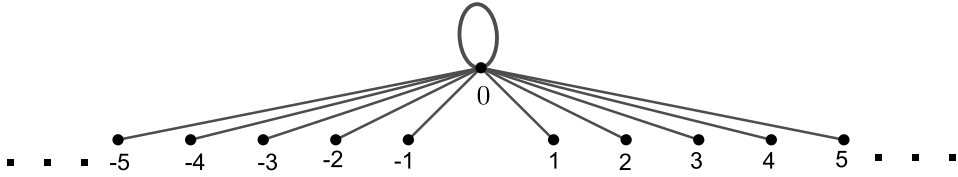}
\end{center}
     \caption{Countable graph $G$ with vertex set $\mathbb Z$, all vertices are connected only with $0$.}\label{fi}

\end{figure}

For this graph $G$, from (\ref{ma}) (see \cite{6} and \cite{BR}) by introducing new variables we obtain 
\begin{equation}\label{e10}
z_{i,x}=\lambda_i\prod_{y \in S(x)} {1\over 1+\sum_{j\in \mathbb{Z}_0} z_{j,y}}, \ \ i\in \mathbb{Z}_0. \end{equation}

\section{Translation invariant measures}

The problem of the finding of the general form of solutions of the equation (\ref{e10}) seems to be a very difficult.
In this subsection, we consider translation-invariant solutions, i.e., $z_x=z\in \mathbb R_{+}^{\infty}.$
In this case the equation (\ref{e10}) has the following form
\begin{equation}\label{e11}
z_i=\lambda_i\cdot \left({1\over 1+\sum_{j\in \mathbb Z_0} z_j}\right)^k, \ \ i\in \mathbb Z_0. \end{equation}
Here $\lambda_i>0, \ z_i>0$.

\begin{lemma}\label{L1} Let $k\geq2$. If there is a positive solution $\{z_j\}_{j\in \mathbb Z_0}$ of the system of equations (\ref{e11}) for some sequence of parameters $\{\lambda_j\}_{j\in \mathbb Z_0}$ then series $\sum_{j\in \mathbb Z_0} z_j$ and $\sum_{j\in \mathbb Z_0}\lambda_j$ obtained  respectively from $\{z_i\}_{i\in \mathbb Z_0}$ and $\{\lambda_j\}_{j\in \mathbb Z_0}$ converge.
\end{lemma}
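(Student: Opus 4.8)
The plan is to exploit that the right-hand side of (\ref{e11}) factors into a part depending on $i$ (namely $\lambda_i$) and a part that is common to all of the equations (namely $(1+\sum_j z_j)^{-k}$). Write $S=\sum_{j\in\mathbb Z_0}z_j\in(0,+\infty]$, a well-defined element of the extended half-line since every summand is positive. Then (\ref{e11}) says precisely that $z_i=c\,\lambda_i$ for every $i\in\mathbb Z_0$, where $c:=(1+S)^{-k}$ is a single constant with $c\in[0,1)$. Thus the whole countable system collapses to the statement that $\{z_i\}$ is one fixed positive multiple of $\{\lambda_i\}$.

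First I would rule out $c=0$, i.e.\ $S=+\infty$. If the series $\sum_j z_j$ diverged we would have $(1+S)^{-k}=0$, and the identity $z_i=c\lambda_i$ would force $z_i=0$ for every $i$, contradicting the positivity of the solution. Hence $S<+\infty$ and $c\in(0,1)$, which already yields the convergence of $\sum_j z_j$, the first assertion of the lemma.

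With $S$ finite, I would invert $z_i=c\lambda_i$ to get $\lambda_i=z_i(1+S)^{k}$ and sum over $i\in\mathbb Z_0$, obtaining $\sum_i\lambda_i=(1+S)^k\sum_i z_i=(1+S)^k S<+\infty$. This gives the convergence of $\sum_j\lambda_j$ and finishes the proof. As a by-product one records the closed relation $\Lambda=S(1+S)^k$ between the two sums, and, eliminating $S$, the scalar consistency equation $c=(1+c\Lambda)^{-k}$ for $c$; this is exactly the reduction that the later sections rely on to count translation-invariant solutions.

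The step I expect to require the most care is the dichotomy $S<\infty$ versus $S=+\infty$: one must justify excluding the divergent case purely from the pointwise identity together with positivity, without tacitly assuming that (\ref{e11}) was only posed for convergent $S$ in the first place. Once the proportionality $z_i=c\lambda_i$ with a common $c$ is in hand, everything else is bookkeeping, and the two convergence claims become two readings of the single identity $\Lambda=S(1+S)^k$.
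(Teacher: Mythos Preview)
Your proof is correct and follows essentially the same route as the paper: you argue by contradiction that $S=\sum_j z_j$ cannot be infinite (since then $z_i=\lambda_i(1+S)^{-k}=0$ would violate positivity), and then sum the proportionality $z_i=\lambda_i(1+S)^{-k}$ to obtain $\sum_i\lambda_i=S(1+S)^k<\infty$. The paper's proof is the same argument with $A$ in place of your $S$; your additional remarks about the scalar equation $c=(1+c\Lambda)^{-k}$ and the relation $\Lambda=S(1+S)^k$ anticipate exactly what the paper uses in Proposition~\ref{P1}.
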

\begin{proof} Let  $\{z_j\}_{j\in \mathbb Z_0}$ be a solution of the system of equations (\ref{e11}). We assume that the series $\sum_{j\in \mathbb Z_0} z_j$ diverges. Then, since $z_j>0$, it is obvious that $\sum_{j\in \mathbb Z_0} z_j=+\infty$. Hence due to (\ref{e11}) we get $z_i=0, \ i\in \mathbb Z_0$, i.e., $\sum_{i\in \mathbb Z_0} z_i<+\infty.$ This is a contradiction. So under the conditions of lemma the series $\sum_{j\in \mathbb Z_0} z_j$  converges.

Let  $\sum_{j\in \mathbb Z_0} z_j=A$. Then from (\ref{e11}) we obtain $z_i=\lambda_i\cdot \left({1\over 1+A}\right)^k$. Thence $\sum_{i\in \mathbb Z_0}\lambda_i=A(1+A)^k$, i.e., the series $\sum_{i\in \mathbb Z_0}\lambda_i$ converges. 
Lemma is proved.
\end{proof}

 By Lemma \ref{L1} it follows that there is no a positive solution of the system of equations (\ref{e11}) for which the series $\sum_{j\in \mathbb Z_0} z_j$ and $\sum_{j\in \mathbb Z_0}\lambda_j$ diverge, i.e., these conditions are necessary for the existence of a solution (\ref{e11}).

\begin{pro}\label{P1} Let $k\geq2$. If the series $\sum_{j\in \mathbb Z_0}\lambda_j$ obtained from a sequence of parameters $\{\lambda_j\}_{j\in \mathbb Z_0}$ converges then for the sequence $\{\lambda_j\}_{j\in \mathbb Z_0}$ there exists a unique positive solution $\{z_j\}_{j\in \mathbb Z_0}$ of the system of equations (\ref{e11}).
\end{pro}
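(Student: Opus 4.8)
The plan is to collapse the infinite system (\ref{e11}) into a single scalar equation in the one unknown
$A:=\sum_{j\in\mathbb Z_0}z_j$. The key observation is that in (\ref{e11}) the factor
$\left(1+\sum_{j\in\mathbb Z_0}z_j\right)^{-k}$ is independent of $i$, so every positive solution must
have the rigid form $z_i=\lambda_i(1+A)^{-k}$; that is, all the coordinates $z_i$ are simultaneously
determined by the single number $A$. This already reduces the problem to finding the admissible values of $A$.

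First I would sum (\ref{e11}) over $i\in\mathbb Z_0$. By Lemma \ref{L1} any positive solution has
$A<\infty$, so the summation is legitimate and gives
\begin{equation}\label{eq:scalarP1}
A(1+A)^k=\Lambda,\qquad \Lambda:=\sum_{j\in\mathbb Z_0}\lambda_j,
\end{equation}
where $\Lambda<\infty$ by hypothesis. Conversely, if $A>0$ solves (\ref{eq:scalarP1}), then setting
$z_i:=\lambda_i(1+A)^{-k}$ produces positive numbers whose sum equals $\Lambda(1+A)^{-k}=A$, and a direct
substitution confirms that this sequence satisfies (\ref{e11}). Hence positive solutions of (\ref{e11})
correspond bijectively to positive roots of the scalar equation (\ref{eq:scalarP1}).

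It then remains to analyse the real function $f(A)=A(1+A)^k$ on $[0,\infty)$. Since $f$ is continuous with
$f(0)=0$, $f(A)\to+\infty$ as $A\to+\infty$, and $f'(A)=(1+A)^{k-1}\bigl(1+(k+1)A\bigr)>0$, the map $f$ is a
strictly increasing bijection of $[0,\infty)$ onto $[0,\infty)$. Therefore, for the given value
$\Lambda>0$ there is exactly one positive root $A^*$ of (\ref{eq:scalarP1}), which by the correspondence
above yields exactly one positive solution $z_i=\lambda_i(1+A^*)^{-k}$ of (\ref{e11}).

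I expect no serious obstacle: the entire content is the reduction to (\ref{eq:scalarP1}), after which the
monotonicity of $f$ delivers existence and uniqueness at once. The only points needing care are the
justification of the term-by-term summation of the infinite system, which is precisely what Lemma \ref{L1}
provides, and the check that the reconstructed sequence $z_i=\lambda_i(1+A^*)^{-k}$ genuinely solves
(\ref{e11}) rather than merely reproducing the aggregate $A^*$.
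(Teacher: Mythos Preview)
Your argument is correct and follows essentially the same route as the paper: reduce (\ref{e11}) to the scalar equation $A(1+A)^k=\Lambda$ via $z_i=\lambda_i(1+A)^{-k}$, then show this equation has a unique positive root. The only cosmetic difference is that the paper invokes Descartes' rule of signs for uniqueness of the root, whereas you compute $f'(A)=(1+A)^{k-1}\bigl(1+(k+1)A\bigr)>0$ directly; your version is slightly cleaner and also makes the existence part (via the intermediate value theorem) explicit.
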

\begin{proof} Let the series $\sum_{j\in \mathbb Z_0}\lambda_j$ converge and its sum is  $\sum_{j\in \mathbb Z_0}\lambda_j=\Lambda$. We will prove that for the sequence   $\{\lambda_j\}_{j\in \mathbb Z_0}$ there is a unique solution of the system of equations (\ref{e11}). By Lemma \ref{L1} it follows that for the existence of a solution $\{z_j\}_{j\in \mathbb Z_0}$ of the system of equations (\ref{e11}) the convergence of the series $\sum_{j\in \mathbb Z_0} z_j$ is necessary.

Let $\sum_{j\in \mathbb Z_0} z_j=A$. Then due to (\ref{e11}) we get $z_i=\frac{\lambda_i}{(1+A)^k}, \ i\in \mathbb Z_0.$ Hence
\begin{equation}\label{e12}
\sum_{j\in \mathbb Z_0}z_j=\frac{\sum_{j\in \mathbb Z_0}\lambda_j}{(1+A)^k},
\end{equation}
i.e.,
$$A(1+A)^k-\Lambda=0, \ A>0.$$
By the Descartes rule of signs, the last equation has only one positive root $A=A_0$. There are infinitely many sequences $\{z_j\}$ for which $\sum_{j\in \mathbb Z_0} z_j=A_0$. Among them the sequence $\{z_j\}$ for which there exists $\{\lambda_j\}$ such that $\{z_j\}$ satisfies (\ref{e11}) is unique. This follows from the equality $z_i=\frac{\lambda_i}{(1+A_0)^k}$ because ${\lambda_i}$ are fixed and $A_0$ is unique. 
\end{proof}
\begin{rk}  We note that the solution  $\{z_j\}_{j\in \mathbb Z_0}$ in Proposition \ref{P1} is normalisable because the convergence of series $\sum_{j\in \mathbb Z_0} z_j^{\frac{k+1}{k}}$ follows from the convergence of $\sum_{j\in \mathbb Z_0} z_j$. Then by Remark \ref{r2}  the Gibbs measure (denoted by $\mu_0$) corresponding to this solution exists.
\end{rk}

\textbf{Markov chain corresponding to the translation-invariant Gibbs measure.} 

For the Gibbs measure $\mu_0$ (of the unique  solution of the system of equations (\ref{e11})) we will check the existence of a stationary distribution of the Markov chain corresponding to the measure $\mu_0$. Consider the matrix $\mathbb P$ of transition probabilities  corresponding to the measure $\mu_0$:
$$\mathbb P =\begin{pmatrix}
 & \vdots & \vdots & \vdots & \vdots & \vdots &  \\
 \cdots & p_{-2-2} &  p_{-2-1}  & p_{-2 0} &  p_{-2 1} &  p_{-2 2} & \cdots \\
 \cdots & p_{-1-2} &  p_{-1-1}  & p_{-1 0} &  p_{-1 1} &  p_{-1 2} & \cdots \\
 \cdots & p_{0-2} &  p_{0-1}  & p_{0 0} &  p_{0 1} &  p_{0 2} & \cdots \\
 \cdots & p_{1-2} &  p_{1-1}  & p_{1 0} &  p_{1 1} &  p_{1 2} & \cdots \\
 \cdots & p_{2-2} &  p_{2-1}  & p_{2 0} &  p_{2 1} &  p_{2 2} & \cdots \\
 & \vdots & \vdots & \vdots & \vdots & \vdots & 
\end{pmatrix}.$$
Here
$$p_{\sigma(x)\sigma(y)}=\frac{a_{\sigma(x)\sigma(y)}\lambda_{\sigma{(y)}} z_{\sigma(y)}}{\sum_{\sigma(y) \in \mathbb Z}{a_{\sigma(x)\sigma(y)}\lambda_{\sigma{(y)}} z_{\sigma(y)}}} \ \  \Rightarrow \  \  p_{ij}=\frac{a_{ij}\lambda_{j}z_j}{\sum_{l \in \mathbb Z}{a_{il}\lambda_{l}z_l}}.$$

For the considered model, we have $\sigma(x)\sigma(y)=0$ for any nearest neighbors $\{x,y\}$. If $i\in \mathbb Z_0$ and $j \in \mathbb Z_0$ then $a_{ij}=0$, $a_{i0}=1$ and $a_{0j}=1$. Hence
$$p_{00}=\frac{\lambda_0z_0}{\lambda_0z_0+\sum_{l \in \mathbb Z_0}{\lambda_lz_l}},$$
$$ p_{01}=\frac{\lambda_1z_1}{\lambda_0z_0+\sum_{l \in \mathbb Z_0}{\lambda_lz_l}}, \ \ p_{0i}=\frac{\lambda_i z_i}{1+\sum_{l \in \mathbb Z_0}{\lambda_lz_l}}, $$
$$p_{10}=\frac{\lambda_1z_0}{\lambda_1z_0+\sum_{l \in \mathbb Z_0}{a_{1l}\lambda_lz_l}}=1, \ \
p_{i0}=\frac{\lambda_iz_0}{\lambda_iz_0+\sum_{l \in \mathbb Z_0}{a_{il}\lambda_lz_l}}=1.$$
 Therefore $\mathbb P$ has the following form (for $z_0=1$):
$$\mathbb P=\begin{pmatrix}
 & \vdots & \vdots & \vdots & \vdots & \vdots &  \\
 \cdots & 0 & 0 & 1  &  0 &  0 & \cdots \\
 \cdots & 0 & 0 & 1  &  0 &  0 & \cdots \\
 \cdots & \frac{\lambda_{-2} z_{-2}}{1+\sum_{l \in \mathbb Z_0}{\lambda_lz_l}} & \frac{\lambda_{-1} z_{-1}}{1+\sum_{l \in \mathbb Z_0}{\lambda_lz_l}} &  p_{00}  &  \frac{\lambda_1 z_1}{1+\sum_{l \in \mathbb Z_0}{\lambda_lz_l}}&  \frac{\lambda_2 z_2}{1+\sum_{l \in \mathbb Z_0}{\lambda_lz_l}} & \cdots \\
 \cdots & 0 & 0 & 1  &  0 &  0 & \cdots \\
 \cdots & 0 & 0 & 1  &  0 &  0 & \cdots \\
  & \vdots & \vdots & \vdots & \vdots & \vdots & 
 \end{pmatrix}.$$

We consider the vector $X=(\cdots,x_{-2},x_{-1},x_0,x_1,x_2,\cdots), \ \sum_{j\in \mathbb Z}{x_j}=1.$ If the system of equations $X\cdot\mathbb P=X$ has a solution then there exists a stationary distribution of the Markov chain corresponding to the measure $\mu_0$. So we solve the equation $X\cdot\mathbb P=X$. We have
$$X\cdot \mathbb P=\left(\cdots,\frac{x_0\lambda_{-1}z_{-1}}{1+\sum_{l \in \mathbb Z_0}{\lambda_lz_l}},1-\frac{x_0\sum_{l \in \mathbb Z_0}{\lambda_lz_l}}{1+\sum_{l \in \mathbb Z_0}{\lambda_lz_l}},\frac{x_0\lambda_1z_{1}}{1+\sum_{l \in \mathbb Z_0}{\lambda_lz_l}}, \frac{x_0\lambda_{2}z_{2}}{1+\sum_{l \in \mathbb Z_0}{\lambda_lz_l}},\cdots\right).$$
From the equality $X\cdot\mathbb P=X$ we get
\begin{equation}\label{e13}
x_0=1-\frac{x_0\sum_{l \in \mathbb Z_0}{\lambda_lz_l}}{1+\sum_{l \in \mathbb Z_0}{\lambda_lz_l}}, \  \ \ x_j=\frac{x_0\lambda_{j}z_{j}}{1+\sum_{l \in \mathbb Z_0}{\lambda_lz_l}}, \ \ \  j\in \mathbb Z_0.
\end{equation}
From the first equality of (\ref{e13}) we  find $x_0$:
$$x_0=\frac{1+\sum_{l \in \mathbb Z_0}\lambda_{l}z_{l}}{1+2\sum_{l \in \mathbb Z_0}\lambda_{l}z_{l}}.$$
Using the expression for $x_0$ from the second equality of (\ref{e13}) we find $x_j$:
$$x_j=\frac{\lambda_{j}z_{j}}{1+2\sum_{l \in \mathbb Z_0}\lambda_{l}z_{l}}, \ j\in \mathbb Z_0.$$
It is easy to see that the resulting vector is stochastic:
$$\sum_{j\in \mathbb Z}{x_j}=\frac{\sum_{l \in \mathbb Z_0}\lambda_{l}z_{l}}{1+2\sum_{l \in \mathbb Z_0}\lambda_{l}z_{l}}+
\frac{1+\sum_{l \in \mathbb Z_0}\lambda_{l}z_{l}}{1+2\sum_{l \in \mathbb Z_0}\lambda_{l}z_{l}}=1.$$
Hence there exists a stationary distribution of the Markov chain corresponding to the measure $\mu_0$.

Due to the uniqueness of the stationary distribution from Theorem 2 in \cite{Sh}, (p.612) we get

\begin{cor} In the set of states $\mathbb Z$ of a Markov chain with the transition probabilities matrix $\mathbb P$, there is exactly one positive recurrent class of essential
	communicating states (for definitions, see Chapter VIII in \cite{Sh}).
\end{cor}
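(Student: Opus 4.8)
The plan is to read off the communication structure of the chain directly from the entries of $\mathbb P$ and then combine it with the uniqueness of the stationary distribution already exhibited above, so that the cited classification theorem applies.

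First I would record the two features of $\mathbb P$ that drive everything: every nonzero state returns deterministically to $0$, since $p_{i0}=1$ for all $i\in\mathbb Z_0$, while from $0$ one reaches every state in a single step, since $p_{0j}>0$ for all $j\in\mathbb Z$ (because $\lambda_j>0$ and $z_j>0$, so every numerator $\lambda_j z_j$ is strictly positive). Consequently $0\to j$ and $j\to 0$ for every $j$, and by transitivity any two states communicate through $0$. Thus the whole state space $\mathbb Z$ forms a single communicating class, and since the chain can never leave it, this class is essential (closed).

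Next I would confirm that the stationary probability vector $X$ found above is the only one. The relations (\ref{e13}) express each $x_j$, $j\in\mathbb Z_0$, as a fixed multiple of $x_0$, and the normalization $\sum_{j\in\mathbb Z}x_j=1$ then pins down $x_0$ uniquely; hence $X$ is the unique stationary distribution. Finally I would invoke Theorem 2 in \cite{Sh} (p.\,612), which ties the uniqueness of the stationary distribution to the presence of exactly one positive recurrent essential communicating class. Since we have produced a genuine stationary \emph{probability} distribution (we verified $\sum_{j\in\mathbb Z}x_j=1$) and it is unique, the theorem delivers the assertion: the chain has exactly one positive recurrent class of essential communicating states.

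The one point that needs care is that the state space is countably infinite, so the mere existence of an \emph{invariant measure} would yield only recurrence, not positive recurrence. The crucial input is that $X$ sums to $1$ rather than being an infinite invariant measure; this is exactly what upgrades recurrence to positive recurrence and, together with the irreducibility established in the first step, places us in the hypotheses under which the cited theorem guarantees a unique positive recurrent class. I expect this countable-state subtlety — distinguishing a normalizable stationary distribution from an unnormalizable invariant measure — to be the only genuine obstacle; the rest is a direct application of the structure of $\mathbb P$.
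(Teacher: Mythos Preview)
Your argument is correct and follows the same route as the paper: the paper's entire proof is the single sentence ``Due to the uniqueness of the stationary distribution from Theorem 2 in \cite{Sh}, (p.612) we get'' the corollary. You have simply made explicit what the paper leaves implicit --- the irreducibility of the chain through the state $0$ and the fact that the stationary vector is a genuine probability distribution --- before invoking the same cited theorem, so your proposal is a more detailed version of the paper's approach rather than a different one.
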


Summarizing we the following 

\begin{thm} Let $k\geq2$. Then for the HC model with a countable set of states  (corresponding to the graph from Fig.\ref{fi}) the following statements are true:
\begin{itemize}
\item[1.] If the series $\sum_{j\in \mathbb Z_0}\lambda_j$ obtained from a sequence of parameters $\{\lambda_j\}_{j\in \mathbb Z_0}$ converges then there exists a unique translation-invariant Gibbs measure.

\item[2.] If the series $\sum_{j\in \mathbb Z_0}\lambda_j$  diverges there is no translation-invariant Gibbs measure.
\end{itemize}
\end{thm}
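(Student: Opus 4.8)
The plan is to read off both assertions from the analysis of the translation-invariant consistency equation (\ref{e11}) already carried out, using the bijection between normalisable boundary laws and Gibbs measures recorded in Remark \ref{r2}. The first step is to make explicit the dictionary between translation-invariant Gibbs measures and positive solutions of (\ref{e11}). On the one hand, a translation-invariant Gibbs measure (a tree-indexed Markov chain invariant under the graph automorphisms of $\Gamma^k$) corresponds, by Remark \ref{r2}, to a normalisable boundary law $l$; by invariance this law may be taken translation invariant, $l_{xy}(i)=l(i)$, and after the normalisation $l(0)\equiv 1$ the consistency equation (\ref{ma}) for our graph $G$ reduces, via the change of variables used to pass from (\ref{ma}) to (\ref{e10}), exactly to the system (\ref{e11}) in the positive unknowns $\{z_i\}_{i\in\mathbb Z_0}$. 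On the other hand, every positive solution of (\ref{e11}) yields a translation-invariant boundary law which, by the Remark following Proposition \ref{P1}, is normalisable (since $\sum_{j}z_j<\infty$ and $(k+1)/k>1$ force $\sum_j z_j^{(k+1)/k}<\infty$) and therefore defines a translation-invariant Gibbs measure.

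Granting this dictionary, statement~2 is immediate: if $\sum_{j\in\mathbb Z_0}\lambda_j$ diverges, then by the contrapositive of Lemma \ref{L1} the system (\ref{e11}) admits no positive solution, so there is no translation-invariant normalisable boundary law, and hence no translation-invariant Gibbs measure can exist.

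For statement~1, suppose $\sum_{j\in\mathbb Z_0}\lambda_j$ converges. By Proposition \ref{P1} the system (\ref{e11}) has a unique positive solution $\{z_j\}_{j\in\mathbb Z_0}$, which is normalisable by the Remark after Proposition \ref{P1}; through the dictionary this produces a translation-invariant Gibbs measure $\mu_0$, giving existence. For uniqueness, any translation-invariant Gibbs measure corresponds, as above, to a positive solution of (\ref{e11}), and such a solution is unique by Proposition \ref{P1}; since the correspondence between normalisable boundary laws and Gibbs measures is injective (Remark \ref{r2}), the translation-invariant Gibbs measure is therefore unique.

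The step I expect to carry the real content is establishing the dictionary of the first paragraph, specifically the claim that a translation-invariant Gibbs measure must arise from a \emph{translation-invariant} (rather than merely normalisable) boundary law. I would justify this through the uniqueness half of the boundary-law/Gibbs-measure correspondence: the boundary law attached to a given Gibbs measure is determined up to the positive constants $c_{xy}$ in (\ref{eq:bl}), so invariance of the measure under the transitive action of the automorphism group of $\Gamma^k$ forces the associated normalised boundary law to be edge-independent. Once positivity and translation invariance of $l$ are secured, the reduction of (\ref{ma}) to (\ref{e11}) is the same substitution already performed to obtain (\ref{e10})–(\ref{e11}), and the remainder is a direct appeal to Lemma \ref{L1} and Proposition \ref{P1}.
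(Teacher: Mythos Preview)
Your proposal is correct and follows essentially the same route as the paper: the theorem is stated there as a summary of Lemma~\ref{L1}, Proposition~\ref{P1}, and the Remark on normalisability, combined with the boundary-law/Gibbs-measure correspondence of Remark~\ref{r2}. You are in fact more careful than the paper on one point: the paper tacitly identifies translation-invariant Gibbs measures with translation-invariant solutions of (\ref{e11}) without spelling out why a translation-invariant measure must come from a translation-invariant boundary law, whereas you flag this and sketch the justification via uniqueness of the boundary law up to the constants $c_{xy}$.
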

\textbf{Example 1.} Let $k=2$. Find  $\{z_j\}_{j\in \mathbb Z_0}$ satisfying the system of equations (\ref{e11}) for the sequence
 $$\lambda_j=\left\{%
\begin{array}{ll}
\frac{9}{4(4j-3)(4j-1)}, \ \ j\in \mathbb N, \\
\frac{9}{4(4j-1)(4j+1)}, \ \ j\in \mathbb {Z}_0\setminus \mathbb N.
\end{array}%
\right.$$

\textbf{Solution.} First we find the sum of the series $\sum_{j\in \mathbb {Z}_0}{\lambda_j}$.
 $$\Lambda=\sum_{j\in \mathbb {Z}_0}{\lambda_j}=\frac94\cdot\sum_{n=1}^\infty{\frac{1}{(2n-1)(2n+1)}}=\frac94\Big(\frac{1}{1\cdot3}+\frac{1}{3\cdot5}+\frac{1}{5\cdot7}+...\Big)=$$
$$=\frac98\Big(\frac11-\frac13\Big)+\frac98\Big(\frac13-\frac15\Big)+\frac98\Big(\frac15-\frac17\Big)+...+\frac98\Big(\frac{1}{2n-1}-\frac{1}{2n+1}\Big)+...=$$
$$=\frac98\cdot\Big(\frac11-\frac13+\frac13-\frac15+\frac15-\frac17+\frac17+...\Big)=\frac98.$$

For $k=2$ and $\Lambda=\frac98$ from the equality $\Lambda=A(1+A)^2$ we find $A_0=\frac12$. From the equality $z_i=\frac{\lambda_i}{(1+A_0)^2}$ we find terms of the series:
 $$z_1=\frac34\cdot\frac49=\frac13, \ z_2=\frac{9}{140}\cdot\frac49=\frac{1}{35}, \ \ldots, \ z_n=\frac{1}{(4n-3)(4n-1)}, \ \ldots,$$
$$z_{-1}=\frac{3}{20}\cdot\frac49=\frac{1}{15}, \ z_{-2}=\frac{1}{28}\cdot\frac49=\frac{1}{63}, \ \ldots, \ z_{-n}=\frac{1}{(4n-1)(4n+1)}, \ \ldots.$$
Hence the series $\sum_{j\in \mathbb Z_0}{z_j}$ has the following form:
$$\sum_{j\in \mathbb Z_0}{z_j}=\sum_{n=-1}^{-\infty}\frac{1}{(4n-1)(4n+1)}+\sum_{n=1}^\infty\frac{1}{(4n-3)(4n-1)}=\frac12.$$

\textbf{Example 2.} For $k=2$ and for a sequence of parameters given by the Poisson distribution
 $$\lambda_j=\left\{%
\begin{array}{ll}
\frac{2.4^j}{j!}\cdot e^{-2.4}, \ \ j\in \mathbb N, \\ [3mm]
\frac{8^{-j}}{\mid j\mid!}\cdot e^{-8}, \ \ j\in \mathbb {Z}_0\setminus \mathbb N,
\end{array}%
\right.$$
find a series $\sum_{j\in \mathbb Z_0}{z_j}$ for which $\{z_j\}_{j\in \mathbb Z_0}$ is the solution of the system of equations (\ref{e11}).

\textbf{Solution.} First we find the sum of the series $\sum_{j\in\mathbb Z_0}{\lambda_j}$.
 $$\Lambda=\sum_{j\in \mathbb Z_0}{\lambda_j}=\sum_{j=1}^\infty{\frac{2.4^j}{j!}\cdot e^{-2.4}}+\sum_{j=1}^{\infty}\frac{8^j}{j!}\cdot e^{-8}=\Pi(2.4)+\Pi(8)=0.9003+0.9997=1.9.$$

For $k=2$ and $\Lambda=1.9$ from the equality $\Lambda=A(1+A)^2$ we find $A_0\approx0.676223$. We define terms of the series from the equality $z_i=\frac{\lambda_i}{(1+A_0)^2}$:
 $$z_1=\frac{2.4}{e^{2.4}}\cdot\frac{1}{(1+A_0)^2}\approx0.07748914,\ z_2\approx0.0929869, \ \ldots, \ z_n=\frac{2.4^n}{e^{2.4}n!}\cdot\frac{1}{(1+A_0)^2}, \ \ldots,$$
$$z_{-1}=\frac{8}{e^{8}}\cdot\frac{1}{(1+A_0)^2}\approx0.0009551476,\ z_{-2}\approx0.003820590, \ \ldots, \ z_{-n}=\frac{8^n}{e^{8}n!}\cdot\frac{1}{(1+A_0)^2}, \ldots.$$
Hence the desired series $\sum_{j\in \mathbb Z_0}{z_j}$ has the following form:
$$\sum_{j\in \mathbb Z_0}{z_j}=\frac{1}{(1+A_0)^2}\cdot\sum_{n=-1}^{-\infty}\frac{8^{-n}}{e^{8}\mid n\mid!}+\frac{1}{(1+A_0)^2}\cdot\sum_{n=1}^\infty\frac{2.4^n}{e^{2.4}n!}=A_0.$$

\textbf{Example 3.} Let $k=2$. For a sequence of parameters given by a geometric distribution:
 $$\lambda_j=\left\{%
\begin{array}{ll}
\alpha(1-\alpha)^j, \ \  \alpha \in (0;1), \ \ j\in \mathbb N, \\ [3mm]
\beta(1-\beta)^{-j}, \ \ \beta\in (0;1),\ \ \alpha+\beta=0.875, \ \ j\in \mathbb {Z}_0\setminus \mathbb N.
\end{array}%
\right.$$
find a series $\sum_{j\in \mathbb Z_0}{z_j}$ for which $\{z_j\}_{j\in \mathbb Z_0}$ is the solution of the system of equations (\ref{e11}).

\textbf{Solution.} First we find the sum of the series $\sum_{j\in\mathbb Z_0}{\lambda_j}$.
$$\Lambda=\sum_{j\in \mathbb Z_0}{\lambda_j}=\sum_{j=1}^\infty{\alpha(1-\alpha)^j}+\sum_{j=1}^{\infty}{\beta(1-\beta)^{j}}=
1-\alpha+1-\beta=2-(\alpha+\beta)=1.125.$$

For $k=2$ and $\Lambda=1.125$ from the equality $\Lambda=A(1+A)^2$ we find $A_0=0.5$. From the equality $z_i=\frac{\lambda_i}{(1+A_0)^2}$ we define terms of the series:
$$z_1=\frac{4\alpha(1-\alpha)}{9}, \  z_2=\frac{4\alpha(1-\alpha)^2}{9},\ \ldots, \ z_n=\frac{4\alpha(1-\alpha)^n}{9}, \ \ldots,$$
$$z_{-1}=\frac{4\beta(1-\beta)}{9}, \ z_{-2}=\frac{4\beta(1-\beta)^2}{9}, \ \ldots, \ z_{-n}=\frac{4\beta(1-\beta)^n}{9}, \ \ldots.$$
So the desired series $\sum_{j\in \mathbb Z_0}{z_j}$ has the following form:
$$\sum_{j\in \mathbb Z_0}{z_j}=\sum_{n=-1}^{-\infty}\frac{4\beta(1-\beta)^{-n}}{9}+\sum_{n=1}^\infty\frac{4\alpha(1-\alpha)^n}{9}=
\frac12.$$

\section{Periodic Gibbs measures}

It is known that there exists one-to-one correspondence between the set $V$ of vertices of a Cayley tree
of order $k\geq1$ and the group $G_k$ that is the free product of $k+1$ cyclic groups of second order with the
corresponding generators $a_1,a_2, \ldots, a_{k+1}$. Therefore, the set $V$ can be identified with the set $G_k$.

Let $G_k/\widehat{G}_k=\{H_1,...,H_r\}$ be the quotient group, where
$\widehat{G}_k$ is a normal subgroup of index $r\geq 1.$

\begin{defn}  The set of vectors $z=\{z_x,x\in G_k\}$
is said to be $\widehat{G}_k$- periodic if  $z_{yx}=z_x$  for all
$\forall x\in G_k, y\in\widehat{G}_k.$\

$G_k$-periodic sets are said to be translation-invariant.
\end{defn}
\begin{defn}  A measure $\mu$ is said to be
$\widehat{G}_k$-periodic if it corresponds to the
$\widehat{G}_k$-periodic set of vectors $z$.
\end{defn} 
Let $G^{(2)}_k$ be the subgroup of $G_k$ consisting the words of even length. 

Consider the $G^{(2)}_k$-periodic Gibbs measures that correspond to set of vectors $z=\{z_x\in \mathbb R_{+}^{\infty}: \, x\in G_k\}$ of the form
$$z_x=\left\{%
\begin{array}{ll}
    z, \ \ \ $ if $ |x| \, \mbox{is even} $,$ \\
    \widetilde{z}, \ \ \ $ if $ |x| \, \mbox{is odd} $.$ \\
\end{array}%
\right. $$
Here $z=(\ldots,\ z_{-1},z_{0}, \ z_{1},\ldots),$ $\widetilde{z}=(\ldots,\ \widetilde{z}_{-1},\widetilde{z}_{0}, \ \widetilde{z}_{1},\ldots).$

Then due to (\ref{e10}) for $G^{(2)}_k$-periodic Gibbs measures we have:
\begin{equation}\label{e14}\left\{%
\begin{array}{ll}
z_i=\lambda_i\cdot \left({1\over 1+\sum_{j\in \mathbb Z_0} \widetilde{z}_j}\right)^k, \ \ i\in \mathbb Z_0, \\
\widetilde{z}_i=\lambda_i\cdot \left({1\over 1+\sum_{j\in \mathbb Z_0} z_j}\right)^k, \ \ i\in \mathbb Z_0.
\end{array}%
\right.
\end{equation}

\begin{rk} We note that the solution $(z,z)$ (i.e., $z=\widetilde{z}$) in (\ref{e14}) corresponds to the unique translation-invariant Gibbs measure. Therefore, we are interested in solutions of the form  $(z,\widetilde{z})$,  $z\neq\widetilde{z}$.
\end{rk}
\begin{lemma}\label{L2} Let $k\geq2$. If there is a positive solution $(z_j, \widetilde{z}_j), \ j\in \mathbb Z_0$ of the system of equations (\ref{e14}) for some sequence of parameters $\{\lambda_j\}_{j\in \mathbb Z_0}$ then the series $\sum_{j\in \mathbb Z_0} z_j$, $\sum_{j\in \mathbb Z_0} \widetilde{z}_j$ and $\sum_{j\in \mathbb Z_0}\lambda_j$  converge.
\end{lemma}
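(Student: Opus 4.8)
The plan is to mirror the argument of Lemma \ref{L1}, exploiting the fact that the two equations in (\ref{e14}) are coupled in such a way that divergence of either one of the two sums immediately forces the other sequence to vanish identically. Accordingly I would first set $A=\sum_{j\in \mathbb Z_0} z_j$ and $\widetilde A=\sum_{j\in \mathbb Z_0}\widetilde z_j$, and argue by contradiction that both are finite before deducing anything about $\{\lambda_j\}$.

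Suppose the series $\sum_{j\in \mathbb Z_0}\widetilde z_j$ diverges; since every $\widetilde z_j>0$, this means $\widetilde A=+\infty$, so that $1/(1+\widetilde A)=0$. Substituting into the first line of (\ref{e14}) then gives $z_i=\lambda_i\bigl(1/(1+\widetilde A)\bigr)^k=0$ for every $i\in\mathbb Z_0$, which contradicts the positivity $z_i>0$ assumed in the hypothesis. By the symmetric argument applied to the second line of (\ref{e14}), the series $\sum_{j\in \mathbb Z_0}z_j$ cannot diverge either. Hence both $A$ and $\widetilde A$ are finite positive numbers, which proves the convergence of the first two series.

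Once convergence of the two $z$-series is in hand, the statement about $\sum_{j\in \mathbb Z_0}\lambda_j$ follows immediately: summing the first equation of (\ref{e14}) over $i\in\mathbb Z_0$ yields $A=\bigl(\sum_{j\in\mathbb Z_0}\lambda_j\bigr)/(1+\widetilde A)^k$, whence $\sum_{j\in\mathbb Z_0}\lambda_j=A(1+\widetilde A)^k<\infty$. There is no substantive obstacle here; the only point requiring care is to observe that it is precisely the positivity of the $z_j$ (respectively $\widetilde z_j$) that converts a divergent partner-sum into the contradiction $z_i\equiv 0$ (respectively $\widetilde z_i\equiv 0$), exactly as in Lemma \ref{L1}. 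The coupling between the two equations is in fact helpful rather than harmful, since divergence of either single sum already produces a contradiction on its own and the two cases never need to be treated simultaneously.
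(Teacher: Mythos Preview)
Your proposal is correct and follows essentially the same argument as the paper's own proof: argue by contradiction that divergence of one sum forces the partner sequence to vanish via the coupled equation, then sum the first line of (\ref{e14}) to obtain $\sum_{j}\lambda_j=A(1+\widetilde A)^k<\infty$. The only cosmetic difference is that the paper starts by assuming $\sum_j z_j=+\infty$ and reads off $\widetilde z_j=0$ from the second equation, whereas you assume $\sum_j\widetilde z_j=+\infty$ first; by symmetry the two routes are identical.
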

\begin{proof} Let $k\geq2$  and there is a positive solution $(z_j, \widetilde{z}_j), \ j\in \mathbb Z_0$ of the system of equations (\ref{e14}). Let us first show that the series $\sum_{j\in \mathbb Z_0} z_j$ and $\sum_{j\in \mathbb Z_0} \widetilde{z}_j$ converge. Assume the opposite, let one of the series, for example, the series $\sum_{j\in \mathbb Z_0} z_j$ diverge, i.e., $\sum_{j\in \mathbb Z_0} z_j=+\infty$. Then from the second equation of (\ref{e14}) we obtain $\widetilde{z}_j=0$, i.e., the system of equations (\ref{e14}) has no positive solutions. In the case when both series $\sum_{j\in \mathbb Z_0} z_j$, $\sum_{j\in \mathbb Z_0} \widetilde{z}_j$ diverge, it is obvious that the system of equations (\ref{e14}) has no solutions. Hence if there is a solution (\ref{e14}) then the series $\sum_{j\in \mathbb Z_0} z_j$ and $\sum_{j\in \mathbb Z_0} \widetilde{z}_j$ converge.

Now let is prove that the series $\sum_{j\in \mathbb Z_0}\lambda_j$ converge. For this we assume  $\sum_{j\in \mathbb Z_0} z_j=A$ and $\sum_{j\in \mathbb Z_0} \widetilde{z}_j=B$. Then from (\ref{e14}) we obtain $\sum_{i\in \mathbb Z_0}\lambda_i=A(1+B)^k$.
Lemma is proved.
\end{proof}
\begin{rk} It follows from Lemma \ref{L2} that if one of the series $\sum_{j\in \mathbb Z_0} z_j$, $\sum_{j\in \mathbb Z_0} \widetilde{z}_j$ and $\sum_{j\in \mathbb Z_0}\lambda_j$ diverges, then the system of equations (\ref{e14}) does not have a positive solution.
\end{rk}
Let $\sum_{j\in \mathbb Z_0} z_j=A$ and $\sum_{j\in \mathbb Z_0} \widetilde{z}_j=B$. Then from (\ref{e14}) we have
$$z_i=\frac{\lambda_i}{(1+B)^k}, \ \ \ \widetilde{z}_i=\frac{\lambda_i}{(1+A)^k}.$$
Consequently,
$$\sum_{j\in \mathbb Z_0}z_j=\frac{\sum_{j\in \mathbb Z_0}\lambda_j}{(1+B)^k}, \ \ \sum_{j\in \mathbb Z_0}\widetilde{z}_j=\frac{\sum_{j\in \mathbb Z_0}\lambda_j}{(1+A)^k}$$
or
\begin{equation}\label{e15}
A(1+B)^k=\sum_{j\in \mathbb Z_0}\lambda_j, \ \ \ \ B(1+A)^k=\sum_{j\in \mathbb Z_0}\lambda_j.
\end{equation}

\begin{pro} Let $k\geq2$. If the series $\sum_{j\in \mathbb Z_0} z_j$ converge and its sum is $\sum_{j\in \mathbb Z_0} z_j=A\neq\frac{1}{k-1}$. Then there exists a number $B$ which is the sum of a unique descending series  $\sum_{j\in \mathbb Z_0} \widetilde{z}_j$ $(z\neq \widetilde{z})$ under the condition $A(1+B)^k=B(1+A)^k$ and there is a unique sequence of parameters $\{\lambda_j\}_{j\in \mathbb Z_0}$ such that  $(z,\widetilde{z})$ and $(\widetilde{z},z)$ are solutions of (\ref{e14}), where $z=\{z_j\}_{j\in \mathbb Z_0},$ $\widetilde{z}=\{\widetilde{z}_j\}_{j\in \mathbb Z_0}$.
\end{pro}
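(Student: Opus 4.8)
The plan is to read the system (\ref{e15}) as a single scalar condition relating the sums $A=\sum_{j\in\mathbb Z_0}z_j$ and $B=\sum_{j\in\mathbb Z_0}\widetilde z_j$, to solve that condition for $B$ in terms of the prescribed $A$, and then to reconstruct $\{\lambda_j\}$ and $\{\widetilde z_j\}$ explicitly. Since both relations in (\ref{e15}) equal $\sum_{j\in\mathbb Z_0}\lambda_j$, any solution must satisfy
\begin{equation*}
A(1+B)^k=B(1+A)^k,
\end{equation*}
equivalently $f(A)=f(B)$ with $f(t)=\dfrac{(1+t)^k}{t}$, $t>0$. Thus the whole problem reduces to understanding the level sets of $f$.

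The main step is therefore a monotonicity analysis of $f$. Differentiating gives
\begin{equation*}
f'(t)=\frac{(1+t)^{k-1}\big((k-1)t-1\big)}{t^2},
\end{equation*}
so $f$ has the single critical point $t_\ast=\frac{1}{k-1}$, is strictly decreasing on $(0,t_\ast)$ and strictly increasing on $(t_\ast,+\infty)$, and satisfies $f(t)\to+\infty$ as $t\to0^+$ and as $t\to+\infty$. Hence $t_\ast$ is a strict global minimum, and for each value $c>f(t_\ast)$ the equation $f(t)=c$ has exactly two positive roots, one on each side of $t_\ast$ (one root on each monotone branch by the intermediate value theorem). Because $A\neq t_\ast$ by hypothesis, $f(A)>f(t_\ast)$, so there is exactly one $B\neq A$ with $f(B)=f(A)$, and it lies on the side of $t_\ast$ opposite to $A$. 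This yields simultaneously the existence and the uniqueness of $B$, and it explains the exclusion of $A=\frac{1}{k-1}$: at that value the only solution is $B=A$, i.e. the translation-invariant case $z=\widetilde z$.

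Finally I reconstruct the data and verify uniqueness. The first line of (\ref{e14}) forces $\lambda_i=z_i(1+B)^k$, which is positive and summable with $\sum_{i\in\mathbb Z_0}\lambda_i=A(1+B)^k<\infty$; the second line then forces $\widetilde z_i=\dfrac{\lambda_i}{(1+A)^k}=\Big(\dfrac{1+B}{1+A}\Big)^k z_i$, so that $\{\widetilde z_j\}$ is a fixed positive rescaling of $\{z_j\}$, hence a uniquely determined convergent series, whose sum equals $B$ exactly by the relation $A(1+B)^k=B(1+A)^k$. By construction $(z,\widetilde z)$ solves (\ref{e14}); and since (\ref{e14}) is invariant under interchanging the roles of $z$ and $\widetilde z$, the pair $(\widetilde z,z)$ is a solution as well. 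Uniqueness of $\{\lambda_j\}$ and of $\{\widetilde z_j\}$ is immediate from the displayed formulas, since $B$ is unique and both $\lambda_i$ and $\widetilde z_i$ are explicit functions of $B$ and of the prescribed $z_i$.

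The step I expect to be the genuine obstacle is the monotonicity analysis of $f$ together with the ``opposite-side'' bookkeeping: one must argue cleanly that the second root $B$ is truly distinct from $A$ and that it is the only such root, which is precisely where the hypothesis $A\neq\frac{1}{k-1}$ is used. The reconstruction of $\{\lambda_j\}$ and $\{\widetilde z_j\}$, together with the positivity and summability checks, is then routine.
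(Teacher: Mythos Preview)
Your argument is correct and reaches the same conclusion as the paper, but by a different route. The paper fixes $y=A$ and studies the polynomial $g(x)=y(1+x)^k-x(1+y)^k$: it expands the binomial, invokes Descartes' rule of signs (together with Bernoulli's inequality to check the sign pattern) to conclude that $g$ has at most two positive roots, notes that $x=y$ is one of them, and then checks separately that $g'(y)\neq 0$ when $y\neq\tfrac{1}{k-1}$ so that $x=y$ is simple and a second distinct root must exist. You instead rewrite the relation as $f(A)=f(B)$ with $f(t)=(1+t)^k/t$ and run a calculus analysis: $f$ is strictly decreasing on $(0,\tfrac{1}{k-1})$, strictly increasing on $(\tfrac{1}{k-1},\infty)$, and blows up at both ends, so each level above the minimum is hit exactly twice. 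This gives existence and uniqueness of $B\neq A$ in one stroke, explains transparently why $A=\tfrac{1}{k-1}$ is the excluded case (the two roots coalesce at the minimum), and even yields the extra information that $A$ and $B$ lie on opposite sides of $\tfrac{1}{k-1}$. The reconstruction of $\lambda_i=z_i(1+B)^k$ and $\widetilde z_i=\lambda_i/(1+A)^k$, and the symmetry giving the second solution $(\widetilde z,z)$, match the paper exactly.
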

\begin{proof} Let $\sum_{j\in \mathbb Z_0} z_j=A$. First we prove the existence of the number $B$ under the conditions $A(1+B)^k=B(1+A)^k$, $z\neq \widetilde{z}$, ($A\neq B$) and $A\neq\frac{1}{k-1}$.

We note that for $A=B$ by (\ref{e14}) it follows that $z_j= \widetilde{z}_j$ for any $j\in \mathbb Z_0$, i.e., the solution of this form corresponds to the translation-invariant Gibbs measure. Therefore, we will consider the case $A\neq B$.

In the equality
\begin{equation}\label{e00}
A(1+B)^k-B(1+A)^k=0
\end{equation}
we introduce the notation $B=x, \ A=y$ and (for a fixed $y$)    consider the following function:
$$f(x)=y(1+x)^k-x(1+y)^k.$$
It is clear that $f(y)=0$, i.e., $x=y$ is the root of the equation $f(x)=0$.

We consider the case $x\neq y$. We rewrite the function $f(x)$ as follows:
$$f(x)=y\Big(x^k+kx^{k-1}+\frac{k(k-1)}{2}x^{k-2}+...+\frac{k(k-1)}{2}x^{2}\Big)-
\Big(\Big(1+y\Big)^k-ky\Big)x+y.$$

From the Bernoulli inequality we have $(1+y)^k>ky, \ (y>0).$ Then on the RHS of the last equality the signs change twice, i.e., by the Descartes rule of signs, the equation $f(x)=0$ has two positive root or this equation has no positive solutions at all. But we have the solution $x=y$. So, the equation $f(x)=0$ has one more positive solution different from $x=y$.

Next we check the multiplicity of the root $x=y$. For this we use the theorem on zeros of the holomorphic function, i.e., $x=y$ is not a multiple root of the equation $f(x)=0$, if $f(y)=0$ and $f'(y)\neq 0$.

We have $f(y)=0$ and $f'(x)=ky(1+x)^{k-1}-(1+y)^k$. Therefore, from
$$ f'(y)=(1+y)^{k-1}\Big(\Big(k-1\Big)y-1\Big)=0$$
we get 
$$y=\frac{1}{k-1}.$$
Hence $x=y$ is not a multiple root, if $y\neq \frac{1}{k-1}$.

Let $\widetilde{B}$ be a solution of (\ref{e00}) different from $A$. Then the existence and uniqueness of $\lambda_i$ follows from the equality $\lambda_i=z_i(1+\widetilde{B})^k, \ i\in \mathbb Z_0$.

There are infinitely many sequences $\{\widetilde{z}_j\}_{j\in \mathbb Z_0}$ for which $\sum_{j\in \mathbb Z_0} \widetilde{z}_j=\widetilde{B}$.
Among them the sequence $\{\widetilde{z}_j\}_{j\in \mathbb Z_0}$ for which there exists $\{\lambda_j\}_{j\in \mathbb Z_0}$ such that $\{\widetilde{z}_j\}_{j\in \mathbb Z_0}$ satisfies (\ref{e14}) is unique. This follows from the equality $\widetilde{z}_i=\frac{\lambda_i}{(1+A)^k}$.

From the above it follows that there is a sequence of parameters $\{\lambda_j\}_{j\in \mathbb Z_0}$ for which $(z,\widetilde{z})$ is the solution of  (\ref{e14}), where $z=\{z_j\}_{j\in \mathbb Z_0}$ and $\widetilde{z}=\{\widetilde{z}_j\}_{j\in \mathbb Z_0}$. We note that due to the symmetry  $(\widetilde{z}, z)$ is also a solution of (\ref{e14}). The proof is complete.\
\end{proof}
Now let us determine under what conditions on the parameters $\{\lambda_j\}_{j\in \mathbb Z_0}$ there are solutions  $(z,\widetilde{z})$ and $(\widetilde{z}, z)$. We introduce the following definition.

\begin{defn}\cite{Kel} A twice continuously differentiable function $f:[0, \infty) \mapsto [0, \infty)$  is
said to be $S$-shaped if it has the following properties:
\begin{itemize}
\item[(1)] It is increasing on $[0, \infty)$ with $f(0)>0$ and $\sup_x{f (x)}<\infty$;

\item[(2)] There exists $\widetilde{x}\in (0,\infty)$ such that the derivative $f'$  is monotone increasing in the interval $(0, \widetilde{x})$ and monotone decreasing in the interval $(\widetilde{x},\infty)$; in other words, $\widetilde{x}$ satisfies $f''(\widetilde{x})=0$ and is the unique inflection point of $f(x)$.
\end{itemize}
\end{defn}
It is known \cite{Kel} that any $S$-shaped function has at most three fixed points in the interval $[0, \infty)$.

\begin{lemma}\cite{K}\label{LK} Let $f:[0,1]\rightarrow [0,1]$
be a continuous function with a fixed point $\xi \in (0,1)$. We
assume that $f$ is differentiable at $\xi$ and $f^{'}(\xi)<-1.$
Then there exist points $x_0$ and $x_1$, $ 0\leq x_0<\xi<x_1
\leq1,$ such that $f(x_0)=x_1$ and $f(x_1)=x_0.$
\end{lemma}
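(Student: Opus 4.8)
The plan is to pass to the second iterate $F=f\circ f$, which is again a continuous self-map of $[0,1]$, and to realise the required pair $\{x_0,x_1\}$ as a genuine $2$-cycle of $f$: a point $x_0$ fixed by $F$ but not by $f$, whose two images straddle $\xi$. First I would record the local picture at $\xi$. Since $f(\xi)=\xi$ and $f'(\xi)<-1<0$, the map $f$ is decreasing through $\xi$, so $f(x)>\xi$ for $x$ slightly below $\xi$. Moreover $F(\xi)=\xi$ and $F'(\xi)=f'(\xi)^2>1$, so with $\phi:=F-\mathrm{id}$ we have $\phi(\xi)=0$ and $\phi'(\xi)=f'(\xi)^2-1>0$; hence $\phi(x)<0$ just to the left of $\xi$ and $\phi(x)>0$ just to the right.

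Next I would isolate what actually needs to be proved. Using the self-map property in the form $F(0)=f(f(0))\ge 0$ and $F(1)=f(f(1))\le 1$ together with the signs of $\phi$ near $\xi$, the intermediate value theorem already produces fixed points of $F$ on both sides of $\xi$. This is not yet enough: two unrelated fixed points of $F$ need not form a single $2$-cycle, and a fixed point of $F$ could merely be another fixed point of $f$. The real content is therefore to produce one fixed point $x_0<\xi$ of $F$ whose image lies strictly above $\xi$. Once this is done, setting $x_1:=f(x_0)$ gives $x_1>\xi$ and $f(x_1)=F(x_0)=x_0$, so $\{x_0,x_1\}$ is the desired cycle with $0\le x_0<\xi<x_1\le 1$.

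To force the straddle I would restrict attention to the maximal interval $I=(a,\xi)$, with $a\ge 0$, on which $f>\xi$; it is nonempty by the local picture. On $\overline I$ I examine $\phi$. At the right end $\phi<0$, as noted above. At the left end $\phi\ge 0$: if $a>0$ then continuity and maximality give $f(a)=\xi$, hence $F(a)=f(\xi)=\xi$ and $\phi(a)=\xi-a>0$; if $a=0$ then $\phi(0)=f(f(0))\ge 0$. The intermediate value theorem then yields an interior zero $x_0\in(a,\xi)\subseteq I$ of $\phi$, so that $f(x_0)>\xi$ as required. The only degenerate case, $a=0$ with $\phi(0)=0$, is handled directly: if $f(0)=\xi$ then $\phi(0)=f(\xi)=\xi\neq 0$, a contradiction, so $f(0)>\xi$ and $x_0=0$ works.

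I expect the main obstacle to be exactly this straddling/pairing step, not the existence of fixed points of $F$, which is a routine application of the intermediate value theorem. The device that resolves it is the passage to the maximal interval where $f>\xi$, which guarantees that the fixed point of $F$ we construct has its image on the far side of $\xi$ and hence is genuinely part of a $2$-cycle. I would also note that differentiability of $f$ at $\xi$ is used only locally, to fix the signs of $\phi$ near $\xi$ and to place $f(x)>\xi$ just to the left of $\xi$, whereas continuity on all of $[0,1]$ and the self-map property supply the global inequalities $\phi(0)\ge 0$ and $f(a)=\xi$.
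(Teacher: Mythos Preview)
The paper does not supply its own proof of this lemma; it is quoted from Kesten \cite{K} and used as a black box. Your argument is correct and self-contained: passing to $F=f\circ f$, using $F'(\xi)=f'(\xi)^2>1$ to get the local sign change of $\phi=F-\mathrm{id}$ at $\xi$, and then restricting to the maximal left-interval $(a,\xi)$ on which $f>\xi$ to force any zero of $\phi$ found there to have its $f$-image strictly above $\xi$, is exactly the right device to upgrade a fixed point of $F$ to a genuine $2$-cycle of $f$ straddling $\xi$. The endpoint analysis (the case $a>0$ giving $f(a)=\xi$ and $\phi(a)=\xi-a>0$; the case $a=0$ with the subcase $\phi(0)=0$ handled directly) is clean, and the chain rule at the single point $\xi$ is legitimate since $f(\xi)=\xi$. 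There is nothing in the paper to contrast with; your proof would serve as a complete substitute for the citation.
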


We rewrite the system of equations (\ref{e15}):
\begin{equation}\label{e16}\left\{%
\begin{array}{ll}
A=f(B); \\
B=f(A).
\end{array}%
\right.
\end{equation}
Here $f(x)=\frac{\Lambda}{(1+x)^k}.$ The system of equations (\ref{e16}) is well studied in \cite{Kel} (sec. 2.2, p. 904) and \cite{Mar} (sec. 5.2, p. 153). It is shown that the function $h(x) =f(f(x))$ is an $S$-shaped function. The following properties of the function $h(x)$ are obvious:
\begin{itemize}
\item  $h(x)$ is an $S$-shaped function with $h(0)=\frac{\Lambda}{(1+\Lambda)^k}$ and $\sup_x{h(x)=\Lambda}.$

\item $f(x)$  has a unique fixed point,  $x_0$,  which is also a fixed point of  $h(x)$.

\item There exists $\Lambda_{\rm cr}>0$ such that if $\Lambda\leq \Lambda_{\rm cr}$ then $h'(x)\leq1$ for any $x\geq0$ and $x_0$ is the only fixed point for $h(x)$.

\item If $\Lambda>\Lambda_{\rm cr},$ then $h(x)$ has three fixed points $x_1<x_0<x_2$, where $f(x_1)=x_2$ and $f(x_2)=x_1$. Moreover $h'(x_0)>1,$ $h'(x)<1$ for $x\in [0;x_1]\cup[x_2;\infty)$ and the three fixed points converge to $x_0( \Lambda_{\rm cr})$ as $\Lambda\rightarrow\Lambda_{\rm cr}.$
\end{itemize}
Since $h'(x_0)=(f'(x_0))^2$, it is easy to see that $x_0$ is the unique fixed point of the function $h(x)$ if and only if $f'(x_0)\geq -1$.

Let $A_0$ be the unique solution of the equation $A=\frac{\Lambda}{(1+A)^k}$. We calculate the derivative $f'(A_0)$:
$$f'(A)=-\frac{k\Lambda}{(1+A)^{k+1}}, \ \ \ f'(A_0)=-\frac{kA_0}{1+A_0}.$$
We solve the inequality $f'(A_0)\geq-1$, and its solution has the form $A_0\leq\frac{1}{k-1}$.
Then from $A_0=\frac{\Lambda}{(1+A_0)^k}$ we obtain that the equation $h(x)=x$ has only one fixed point for
$$\Lambda\leq\frac{k^k}{(k-1)^{k+1}}=\Lambda_{\rm cr}(k).$$
From the inequality $f'(A_0)<-1$ we can get $\Lambda>\Lambda_{\rm cr}$. Then by Lemma \ref{LK}, there are at least three fixed points. On the other side under this condition by the property of an $S$-shaped function we have at most three fixed points for the function $h(x)$. Hence there are exactly three fixed points of the equation $h(x)=x$ for $\Lambda>\Lambda_{\rm cr}$.

So the system of equations (\ref{e16}) has a unique solution of the form $(A; A)$, i.e., $A=B$ for $\Lambda\leq\Lambda_{\rm cr}$, and for $\Lambda>\Lambda_{\rm cr}$ it has two positive solutions $\Big(A_0; B_0\Big)$ and $\Big(B_0; A_0\Big)$ besides the solution $(A, A)$.

\begin{rk} Solutions of the system of equations (\ref{e14}) corresponding to $G^{(2)}_k$-periodic Gibbs measures are sometimes called two-periodic.
Similarly to work \cite{HKR} it is easy to obtain that two-periodic solutions $(z,\widetilde{z})$ and $(\widetilde{z}, z)$ $\Big(z=\{z_j\}_{j\in \mathbb Z_0}, \ \widetilde{z}=\{\widetilde{z}_j\}_{j\in \mathbb Z_0}\Big)$ of the system of equations (\ref{e14}) are normalizable if the series $\sum_{i\in \mathbb Z}z_i^{\frac{k+1}{k}}$ and $\sum_{i\in \mathbb Z}\widetilde{z}_i^{\frac{k+1}{k}}$ converge. These series by virtue of Lemma \ref{L2} (a necessary condition for the existence of a solution), converge. Then in the case of two-periodic solutions there exist Gibbs measures $\mu_1$ and $\mu_2$ corresponding to the solutions $(z,\widetilde{z})$ and $(\widetilde{z}, z)$, respectively.
\end{rk}

\textbf{Markov chain corresponding to the periodic Gibbs measure.}

Similarly to the case of the translation-invariant Gibbs measure, we check the existence of a stationary distribution of the Markov chain corresponding to the measures $\mu_1$ and $\mu_2$. Using the method from \cite{RKhM} ($\mathbb P=\mathbb P_{\mu_1}\cdot \mathbb P_{\mu_2}$) we construct the probability transition matrix $\mathbb P$ corresponding to the measure $\mu_i$ $(i=1,2)$:
$$\mathbb P=\begin{pmatrix}
 & \vdots & \vdots & \vdots & \vdots & \vdots & \\
 \cdots & 0 & 0 & 1  &  0 &  0 & \cdots \\  \cdots & 0 & 0 & 1  &  0 &  0 & \cdots \\
 \cdots & \frac{\lambda_{-2} z_{-2}}{1+\sum_{l \in \mathbb Z_0}{\lambda_lz_l}} & \frac{\lambda_{-1} z_{-1}}{1+\sum_{l \in \mathbb Z_0}{\lambda_lz_l}} &  \frac{1}{1+\sum_{l \in \mathbb Z_0}{\lambda_lz_l}}  &  \frac{\lambda_1 z_1}{1+\sum_{l \in \mathbb Z_0}{\lambda_lz_l}}&  \frac{\lambda_2 z_2}{1+\sum_{l \in \mathbb Z_0}{\lambda_lz_l}} & \cdots \\
 \cdots & 0 & 0 & 1  &  0 &  0 & \cdots \\  \cdots & 0 & 0 & 1  &  0 &  0 & \cdots \\
  & \vdots & \vdots & \vdots & \vdots & \vdots &  \end{pmatrix}\times$$
 $$\times  \begin{pmatrix}
  & \vdots & \vdots & \vdots & \vdots & \vdots &  \\
 \cdots & 0 & 0 & 1  &  0 &  0 & \cdots \\  \cdots & 0 & 0 & 1  &  0 &  0 & \cdots \\
 \cdots & \frac{\lambda_{-2} \widetilde{z}_{-2}}{1+\sum_{l \in \mathbb Z_0}{\lambda_l\widetilde{z}_l}} & \frac{\lambda_{-1} \widetilde{z}_{-1}}{1+\sum_{l \in \mathbb Z_0}{\lambda_l\widetilde{z}_l}} &  \frac{1}{1+\sum_{l \in \mathbb Z_0}{\lambda_l\widetilde{z}_l}}  &  \frac{\lambda_1 \widetilde{z}_1}{1+\sum_{l \in \mathbb Z_0}{\lambda_l\widetilde{z}_l}}&  \frac{\lambda_2 \widetilde{z}_2}{1+\sum_{l \in \mathbb Z_0}{\lambda_l\widetilde{z}_l}} & \cdots \\
 \cdots & 0 & 0 & 1  &  0 &  0 & \cdots \\  \cdots & 0 & 0 & 1  &  0 &  0 & \cdots \\
 & \vdots & \vdots & \vdots & \vdots & \vdots & 
 \end{pmatrix}.$$
 $$\mathbb P=\begin{pmatrix}
  & \vdots & \vdots & \vdots & \vdots & \vdots & \\
 \cdots & \frac{\lambda_{-2} \widetilde{z}_{-2}}{1+\sum_{l \in \mathbb Z_0}{\lambda_l\widetilde{z}_l}} & \frac{\lambda_{-1} \widetilde{z}_{-1}}{1+\sum_{l \in \mathbb Z_0}{\lambda_l\widetilde{z}_l}} &  \frac{1}{1+\sum_{l \in \mathbb Z_0}{\lambda_l\widetilde{z}_l}}  &  \frac{\lambda_1 \widetilde{z}_1}{1+\sum_{l \in \mathbb Z_0}{\lambda_l\widetilde{z}_l}}&  \frac{\lambda_2 \widetilde{z}_2}{1+\sum_{l \in \mathbb Z_0}{\lambda_l\widetilde{z}_l}} & \cdots \\
 \cdots & \frac{\lambda_{-2} \widetilde{z}_{-2}}{1+\sum_{l \in \mathbb Z_0}{\lambda_l\widetilde{z}_l}} & \frac{\lambda_{-1} \widetilde{z}_{-1}}{1+\sum_{l \in \mathbb Z_0}{\lambda_l\widetilde{z}_l}} &  \frac{1}{1+\sum_{l \in \mathbb Z_0}{\lambda_l\widetilde{z}_l}}  &  \frac{\lambda_1 \widetilde{z}_1}{1+\sum_{l \in \mathbb Z_0}{\lambda_l\widetilde{z}_l}} &  \frac{\lambda_2 \widetilde{z}_2}{1+\sum_{l \in \mathbb Z_0}{\lambda_l\widetilde{z}_l}} & \cdots \\
 \cdots & c_{0-2} & c_{0-1} &  c_{00}  &  c_{01}&  c_{02} & \cdots \\
 \cdots & \frac{\lambda_{-2} \widetilde{z}_{-2}}{1+\sum_{l \in \mathbb Z_0}{\lambda_l\widetilde{z}_l}} & \frac{\lambda_{-1} \widetilde{z}_{-1}}{1+\sum_{l \in \mathbb Z_0}{\lambda_l\widetilde{z}_l}} &  \frac{1}{1+\sum_{l \in \mathbb Z_0}{\lambda_l\widetilde{z}_l}}  &  \frac{\lambda_1 \widetilde{z}_1}{1+\sum_{l \in \mathbb Z_0}{\lambda_l\widetilde{z}_l}}&  \frac{\lambda_2 \widetilde{z}_2}{1+\sum_{l \in \mathbb Z_0}{\lambda_l\widetilde{z}_l}} & \cdots \\
 \cdots & \frac{\lambda_{-2} \widetilde{z}_{-2}}{1+\sum_{l \in \mathbb Z_0}{\lambda_l\widetilde{z}_l}} & \frac{\lambda_{-1} \widetilde{z}_{-1}}{1+\sum_{l \in \mathbb Z_0}{\lambda_l\widetilde{z}_l}} &  \frac{1}{1+\sum_{l \in \mathbb Z_0}{\lambda_l\widetilde{z}_l}}  &  \frac{\lambda_1 \widetilde{z}_1}{1+\sum_{l \in \mathbb Z_0}{\lambda_l\widetilde{z}_l}}&  \frac{\lambda_2 \widetilde{z}_2}{1+\sum_{l \in \mathbb Z_0}{\lambda_l\widetilde{z}_l}} & \cdots \\
  & \vdots & \vdots & \vdots & \vdots & \vdots & 
 \end{pmatrix}.$$
Here  we have
$$c_{0i}=\frac{1}{1+\sum_{l \in \mathbb Z_0}{\lambda_lz_l}}\cdot\frac{\lambda_{i} \widetilde{z}_{i}}{1+\sum_{l \in \mathbb Z_0}{\lambda_l\widetilde{z}_l}}, \ \ c_{00}=1-\frac{1}{1+\sum_{l \in \mathbb Z_0}{\lambda_lz_l}}\cdot\frac{\sum_{l \in \mathbb Z_0}{\lambda_l\widetilde{z}_l}}{1+\sum_{l \in \mathbb Z_0}{\lambda_l\widetilde{z}_l}}$$
for $i\in \mathbb Z_0$. Consider the vector $X=(\cdots,x_{-2},x_{-1},x_0,x_1,x_2,\cdots), \ \sum_{j\in \mathbb Z}{x_j}=1$ and we solve the system of equations $X\cdot\mathbb P=X$. Then
\begin{equation}\label{e18}
x_0=\frac{1-x_0}{1+\sum_{l \in \mathbb Z_0}{\lambda_l\widetilde{z}_l}}+x_0\cdot c_{00}, \ \ x_j=\frac{(1-x_0)\lambda_{j}\widetilde{z}_{j}}{1+\sum_{l \in \mathbb Z_0}{\lambda_l\widetilde{z}_l}}+x_0\cdot c_{0j}, \  \ j\in \mathbb Z_0.
\end{equation}
From the first equality of (\ref{e18}) we find $x_0$:
$$x_0=\frac{1+\sum_{l \in \mathbb Z_0}{\lambda_lz_l}}{1+\sum_{l \in \mathbb Z_0}{\lambda_l\widetilde{z}_l}+\sum_{l \in \mathbb Z_0}{\lambda_lz_l}}.$$
Using the expression for $x_0$ from the second equality (\ref{e18}) we find $x_j, \ j\in\mathbb Z_0$:
$$x_j=\frac{\lambda_{j}\widetilde{z}_{j}}{{1+\sum_{l \in \mathbb Z_0}{\lambda_l\widetilde{z}_l}+\sum_{l \in \mathbb Z_0}{\lambda_lz_l}}}.$$
It is easy to see that the resulting vector is stochastic:
$$\sum_{j\in \mathbb Z}{x_j}=\frac{\sum_{l \in \mathbb Z_0}\lambda_{j}\widetilde{z}_{j}}{1+\sum_{l \in \mathbb Z_0}\lambda_{j}\widetilde{z}_{j}+\sum_{l \in \mathbb Z_0}\lambda_{j}z_{j}}+\frac{1+\sum_{l \in \mathbb Z_0}\lambda_{j}z_{j}}{1+\sum_{l \in \mathbb Z_0}\lambda_{j}\widetilde{z}_{j}+\sum_{l \in \mathbb Z_0}\lambda_{j}z_{j}}=1.$$
Hence there exists a stationary distribution of the Markov chain corresponding to the measure $\mu_i \ (i=1,2)$.

Thus, the following theorem is true.

\begin{thm} Let $k\geq2$ and $\Lambda_{\rm cr}(k)=\frac{k^k}{(k-1)^{k+1}}$. Then for the HC model with a countable set of states  (corresponding to the graph from Fig.\ref{fi}) the following statements are true:
\begin{itemize}
\item[1.] If the series $\sum_{j\in \mathbb Z_0}\lambda_j$ obtained from a sequence of parameters $\{\lambda_j\}_{j\in \mathbb Z_0}$ converges and its sum is $\sum_{j\in \mathbb Z_0}\lambda_j=\Lambda$, then for $0<\Lambda\leq\Lambda_{\rm cr}$ there exists unique $G_k^{(2)}$-periodic Gibbs measure $\mu_0$ that is translation-invariant, and for $\Lambda>\Lambda_{\rm cr}$ there are exactly three $G_k^{(2)}$-periodic Gibbs measures $\mu_0, \mu_1, \mu_2$, where measures $\mu_1$ and $\mu_2$ are $G_k^{(2)}$-periodic(non translation-invariant) Gibbs measures.
\item[2.] If the series $\sum_{j\in \mathbb Z_0}\lambda_j$  diverges then there is no $G_k^{(2)}$-periodic Gibbs measure.
\end{itemize}
\end{thm}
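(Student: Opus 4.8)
The plan is to assemble the final theorem from the pieces already established earlier in the paper, organizing the argument around the equivalence between $G_k^{(2)}$-periodic Gibbs measures and positive solutions $(z,\widetilde z)$ of the system \eqref{e14}, together with the reduction \eqref{e15}--\eqref{e16} to the scalar fixed-point problem for $f(x)=\Lambda/(1+x)^k$ and its second iterate $h=f\circ f$. First I would observe that, by Lemma \ref{L2} and the accompanying remark, a positive solution of \eqref{e14} can exist only when the series $\sum_{j\in\mathbb Z_0}\lambda_j$ converges; this immediately disposes of statement 2, since divergence of $\sum_j\lambda_j$ forbids any positive solution and hence any $G_k^{(2)}$-periodic Gibbs measure. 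So the substance is statement 1, under the standing assumption $\Lambda=\sum_{j\in\mathbb Z_0}\lambda_j<\infty$.

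For statement 1, I would first record that solving \eqref{e14} is equivalent, after setting $A=\sum_j z_j$ and $B=\sum_j\widetilde z_j$, to solving \eqref{e16}, i.e. $A=f(B)$, $B=f(A)$, because the individual coordinates are then recovered uniquely via $z_i=\lambda_i/(1+B)^k$ and $\widetilde z_i=\lambda_i/(1+A)^k$. Thus counting $G_k^{(2)}$-periodic Gibbs measures reduces to counting fixed points of $h(x)=f(f(x))$, with the diagonal $A=B$ corresponding to the translation-invariant measure $\mu_0$ already produced in Section 3. Invoking the $S$-shaped structure of $h$ cited from \cite{Kel,Mar}, and the computation $f'(A_0)=-kA_0/(1+A_0)$ at the unique fixed point $A_0$ of $f$, the threshold $f'(A_0)=-1$ translates into $A_0=\frac{1}{k-1}$ and hence, through $A_0=\Lambda/(1+A_0)^k$, into $\Lambda=\Lambda_{\rm cr}(k)=\frac{k^k}{(k-1)^{k+1}}$. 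For $0<\Lambda\le\Lambda_{\rm cr}$ the relation $h'(x_0)=(f'(x_0))^2\le 1$ gives $x_0$ as the unique fixed point of $h$, so the only solution is the diagonal one and $\mu_0$ is the unique $G_k^{(2)}$-periodic measure; for $\Lambda>\Lambda_{\rm cr}$, Lemma \ref{LK} furnishes at least three fixed points and the $S$-shaped bound allows at most three, yielding exactly the triple $x_1<x_0<x_2$ with $f(x_1)=x_2$, $f(x_2)=x_1$. The two off-diagonal fixed points produce the two distinct solutions $(z,\widetilde z)$ and $(\widetilde z,z)$, hence the measures $\mu_1,\mu_2$.

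It then remains to certify that these solutions actually correspond to genuine Gibbs measures, which requires normalisability. Here I would appeal to Remark \ref{r2}(b) and the normalisability remark following the two-periodic analysis: since Lemma \ref{L2} guarantees $\sum_j z_j$ and $\sum_j\widetilde z_j$ converge, convergence of $\sum_i z_i^{(k+1)/k}$ and $\sum_i \widetilde z_i^{(k+1)/k}$ follows, so each boundary law is normalisable and defines a Gibbs measure. Finally I would note that the Markov-chain computation carried out just before the theorem statement — constructing $\mathbb P=\mathbb P_{\mu_1}\cdot\mathbb P_{\mu_2}$ and exhibiting the explicit stochastic stationary vector $X$ with $x_0$ and $x_j$ as displayed in \eqref{e18} — confirms that $\mu_1,\mu_2$ are well-defined tree-indexed Markov chains, completing the identification.

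The main obstacle I anticipate is not any single calculation but the clean justification of the exact count \emph{three} for $\Lambda>\Lambda_{\rm cr}$: one must combine the lower bound from Lemma \ref{LK} (at least three fixed points once $f'(x_0)<-1$) with the upper bound from the $S$-shaped property of $h$ (at most three fixed points) and verify that the two extra fixed points are genuinely off-diagonal, i.e. $x_1\neq x_2$ and neither equals $x_0$, so that they yield two \emph{distinct} non-translation-invariant measures $\mu_1\neq\mu_2$ rather than a degenerate coincidence. The boundary case $\Lambda=\Lambda_{\rm cr}$, where all three fixed points collapse to $x_0$, should be handled explicitly so that the uniqueness claim at the threshold is airtight.
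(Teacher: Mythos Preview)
Your proposal is correct and follows essentially the same route as the paper: you invoke Lemma~\ref{L2} for part 2, reduce \eqref{e14} to the scalar system \eqref{e16} via the sums $A,B$, analyze the $S$-shaped iterate $h=f\circ f$ with the derivative threshold $f'(A_0)=-1$ giving $\Lambda_{\rm cr}$, combine Lemma~\ref{LK} with the $S$-shape bound to count fixed points, and finish with the normalisability remark and the stationary-distribution computation for $\mathbb P$. The only nuance you flag that the paper treats somewhat implicitly is the boundary case $\Lambda=\Lambda_{\rm cr}$ and the distinctness $x_1\neq x_0\neq x_2$, but your handling matches theirs.
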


By virtue of (\ref{e14}) from equalities $z_i=\frac{\lambda_i}{(1+B_0)^k}$ and $\widetilde{z}_i=\frac{\lambda_i}{(1+A_0)^k}$ we can get
$$\sum_{j\in \mathbb Z_0} \frac{\lambda_j}{(1+B_0)^k}=A_0,\ \ \ \sum_{j\in \mathbb Z_0} \frac{\lambda_i}{(1+A_0)^k}=B_0.$$

\begin{rk} We have $\Lambda_{cr}(2)=4$ for $k=2$ and the sums of the series $\sum_{j\in \mathbb Z_0}z_j$ and $\sum_{j\in \mathbb Z_0}\widetilde{z}_j$ obtained from the solution $(z,\widetilde{z})$ have the following sum
\begin{equation}\label{e19}
A_0(\Lambda)=\frac{\Lambda-2\pm\sqrt{\Lambda(\Lambda-4)}}{2}, \ B_0(\Lambda)=\frac{\Lambda-2\mp\sqrt{\Lambda(\Lambda-4)}}{2}.
\end{equation}
\end{rk}
\textbf{Example 4.} Let $k=2$. Find solutions of the system of equations (\ref{e14}) corresponding to $G_{k}^{(2)}$-periodic (not translation-invariant) Gibbs measures for the sequence
$$\lambda_j=\left\{%
\begin{array}{ll}
\frac{9}{(4j-3)(4j-1)}, \ \ j\in \mathbb N, \\
\frac{9}{(4j-1)(4j+1)}, \ \ -j\in \mathbb N.
\end{array}%
\right.$$

\textbf{Solution.} First we find the sum of the series $\sum_{j\in Z_0}{\lambda_j}$:
$$\Lambda=\sum_{j\in Z_0}{\lambda_j}=9\cdot\sum_{n=1}^\infty{\frac{1}{(2n-1)(2n+1)}}=9\Big(\frac{1}{1\cdot3}+\frac{1}{3\cdot5}+\frac{1}{5\cdot7}+\ldots\Big)=$$
$$=\frac92\cdot\Big(\frac11-\frac13+\frac13-\frac15+\frac15-\frac17+\frac17+\ldots\Big)=\frac92.$$

There exist solutions (with condition $A\neq B$) of the system of equations (\ref{e14}) corresponding to $G_{k}^{(2)}$-periodic (not translation-invariant) Gibbs measures for $k=2$ and $\Lambda>\Lambda_{\rm cr}(2)=4$.
Using the formulas (\ref{e19}) we solve the following system of equations
$$\left\{%
\begin{array}{ll}
A(1+B)^2=\frac92; \\ [2mm]
B(1+A)^2=\frac92.
\end{array}%
\right.$$
Solutions have the following forms: ($\frac12,2$) and ($2,\frac12$).

Let $A_0=\frac12$ and $B_0=2$. In this case, we determine the terms of the series from the equality $z_i=\frac{\lambda_i}{(1+B_0)^2}$:
$$z_x=\Big(...,\ \frac{1}{(4n-1)(4n+1)},\ \ldots, \ \frac{1}{63}, \ \frac{1}{15}, \ 1, \ \frac{1}{3}, \ \frac{1}{35}, \ \ldots, \ \frac{1}{(4n-3)(4n-1)}, \ \ldots \Big)$$
Hence, the desired series $\sum_{j\in \mathbb Z_0}{z_j}$ has the following form
$$\sum_{j\in \mathbb Z_0}{z_j}=\sum_{n=-1}^{-\infty}\frac{1}{(4n+1)(4n-1)}+\sum_{n=1}^\infty\frac{1}{(4n-3)(4n-1)}=\frac12.$$
We determine the values of $\widetilde{z}_j$ from the equality $\widetilde{z}_j=\frac{\lambda_j}{(1+A_0)^2}$:
$$\widetilde{z}_x=\Big(\ldots, \ \frac{4}{(4n-1)(4n+1)}, \ \ldots, \ \frac{4}{63}, \ \frac{4}{15}, \ 1, \ \frac{4}{3}, \ \frac{4}{35}, \ \ldots, \ \frac{4}{(4n-3)(4n-1)}, \ \ldots\Big)$$
From here the series $\sum_{j\in \mathbb Z_0}{\widetilde{z}_j}$ has the form
$$\sum_{j\in \mathbb Z_0}{\widetilde{z}_j}=\sum_{n=-1}^{-\infty}\frac{4}{(4n+1)(4n-1)}+\sum_{n=1}^\infty\frac{4}{(4n-3)(4n-1)}=2.$$

In the case $A_0=2$, $B_0=\frac12$ we can similarly obtain as solutions
$$z_x=\Big(\ldots, \ \frac{4}{(4n-1)(4n+1)}, \ \ldots, \ \frac{4}{63}, \ \frac{4}{15}, \ 1, \ \frac{4}{3}, \ \frac{4}{35}, \ \ldots, \ \frac{4}{(4n-3)(4n-1)}, \ \ldots\Big),$$
$$\widetilde{z}_x=\Big(...,\ \frac{1}{(4n-1)(4n+1)},\ \ldots, \ \frac{1}{63}, \ \frac{1}{15}, \ 1, \ \frac{1}{3}, \ \frac{1}{35}, \ \ldots, \ \frac{1}{(4n-3)(4n-1)}, \ \ldots \Big)$$
and
$$\sum_{j\in \mathbb Z_0}{z_j}=2,\ \ \sum_{j\in \mathbb Z_0}{\widetilde{z}_j}=1/2.$$\

\textbf{Example 5.} Let $k=2$. For a sequence of parameters given by a geometric distribution
$$ \lambda_j=\left\{%
\begin{array}{ll}
4\alpha(1-\alpha)^j, \ \  \alpha \in (0;1), \ \ j\in \mathbb N, \\ [3mm]
4\beta(1-\beta)^{-j}, \ \ \beta\in (0;1),\ \ \alpha+\beta=\frac23, \ \ j\in \mathbb {Z}_0\setminus \mathbb N.
\end{array}
\right.$$
find solutions of the system of equations (\ref{e14}) corresponding to $G_{k}^{(2)}$-periodic (not translation-invariant) Gibbs measures.

\textbf{Solution.} First we find the sum of the series $\sum_{j\in Z_0}{\lambda_j}$:
$$\Lambda=\sum_{j\in \mathbb Z_0}{\lambda_j}=4\sum_{j=1}^\infty{\alpha(1-\alpha)^j}+4\sum_{j=1}^{\infty}{\beta(1-\beta)^{j}}=4\Big(1-\alpha+1-\beta\Big)=\frac{16}{3}.$$

There exist solutions (with condition $A\neq B$) of the system of equations (\ref{e14}) corresponding to $G_{k}^{(2)}$-periodic (not translation-invariant) Gibbs measures for $k=2$ and $\Lambda>\Lambda_{\rm cr}(2)=4$.
Using the formulas (\ref{e19}) we solve the following system of equations

$$\left\{%
\begin{array}{ll}
A(1+B)^2=\frac{16}{3}; \\ [2mm]
B(1+A)^2=\frac{16}{3}.
\end{array}%
\right.$$
Solutions have the following forms: ($\frac13,3$) and ($3,\frac13$).

Let $A_0=\frac13$ and $B_0=3$. In this case, we determine the terms of the series from the equality $z_i=\frac{\lambda_i}{(1+B_0)^2}$.
Hence, the desired series $\sum_{j\in \mathbb Z_0}{z_j}$ has the following form:
$$\sum_{j\in \mathbb Z_0}{z_j}=\sum_{j=1}^\infty\frac{\alpha(1-\alpha)^j}{4}+\sum_{j=-1}^{-\infty}\frac{\beta(1-\beta)^{-j}}{4}
=\frac13.$$
We determine the values of $\widetilde{z}_j$ from the equality  $\widetilde{z}_j=\frac{\lambda_j}{(1+A_0)^2}$.
From here the series $\sum_{j\in \mathbb Z_0}{\widetilde{z}_j}$ has the form:
$$\sum_{j\in \mathbb Z_0}{\widetilde{z}_j}=
\sum_{j=1}^\infty\frac{9\alpha(1-\alpha)^j}{4}+\sum_{j=-1}^{-\infty}\frac{9\beta(1-\beta)^{-j}}{4}=3.$$

In the case $A_0=3$, $B_0=\frac13$ we can similarly obtain as solutions
$$\sum_{j\in \mathbb Z_0}{z_j}=
\sum_{j=1}^\infty\frac{9\alpha(1-\alpha)^j}{4}+\sum_{j=-1}^{-\infty}\frac{9\beta(1-\beta)^{-j}}{4}=3,$$
$$\sum_{j\in \mathbb Z_0}{\widetilde{z}_j}=\sum_{j=1}^\infty\frac{\alpha(1-\alpha)^j}{4}+\sum_{j=-1}^{-\infty}\frac{\beta(1-\beta)^{-j}}{4}
=\frac13.$$\\

\section*{ Acknowledgements}

The work supported by the fundamental project (number: F-FA-2021-425)  of The Ministry of Innovative Development of the Republic of Uzbekistan.


\begin{thebibliography}{99}
	
\bibitem{BR} L.V. Bogachev,  U.A. Rozikov, \textit{On the uniqueness of Gibbs measure in the Potts model on a Cayley tree with external field}.  J. Stat. Mech. Theory Exp. \textbf{7} (2019), 073205, 76 pp.

\bibitem{bw} G. Brightwell, P. Winkler, \textit{ Graph homomorphisms and phase transitions} J. Combin. Theory Ser.B. \textbf{77}, (1999), 221-262.

\bibitem{bhw}	G. Brightwell, O. H\"{a}ggstr\"{o}m, P. Winkler, \textit{Non monotonic behavior in hard-core and Widom-Rowlinson models}. Jour. Stat. Phys. \textbf{94}. (1999), 415-435.

\bibitem{B}   S. Buchholz, \textit{Phase transitions for a class of gradient fields,} Probability Theory and Related Fields. \textbf{179}, (2021), 969-1022.

\bibitem{FV} S. \ Friedli and Y. Velenik: {\em Statistical mechanics of lattice systems. A concrete mathematical introduction}, Cambridge University Press, Cambridge, 2018. xix+622 pp.

 \bibitem{FS} T. Funaki, H. Spohn,  \textit{Motion by mean curvature from the Ginzburg - Landau $\nabla \phi$  interface model.} Commun. Math. Phys. \textbf{185}(1), (1997), 1–36.

\bibitem{GR}   N.N. Ganikhodjaev, U.A. Rozikov, \textit{The Potts model with countable set of spin values on a Cayley Tree.} Letters in Mathematical Physics. 75, (2006), 99--109.

\bibitem{G}   N.N. Ganikhodjaev, \textit{Limiting Gibbs measures of Potts model with countable set of spin values.} J. Math. Anal. Appl. \textbf{336} (2007), 693-703.


\bibitem{Kel}	D. Galvin, F. Martinelli, K. Ramanan, P. Tetali, \textit{The multi-state Hard Core model on a regular tree.}  SIAM Journal on Discrete Mathematics.  \textbf{25}(2), (2011), 894-915.

\bibitem{6} H.-O. Georgii, \textit{Gibbs Measures and Phase Transitions}, (De Gruyter Stud. Math., Vol. 9), Walter de Gruyter,
Berlin (1988).

\bibitem{HKR}   F. Henning, C. K\"{u}lske, A. Le Ny, U.A. Rozikov, \textit{Gradient gibbs measures for the SOS-model with countable values on a Cayley tree.} Electron. J. Probab., \textbf{24}, 2019.  DOI: 10.1214/19-EJP364.

\bibitem{HK}   F. Henning, C. K\"{u}lske, \textit{Coexistence of localized Gibbs measures and delocalized gradient Gibbs measures on trees.} Ann. Appl. Probab. \textbf{31} (5), (2021), 2284-2310.

\bibitem{HK1}  F. Henning, C. K\"{u}lske, \textit{Existence of gradient Gibbs measures on regular
trees which are not translation invariant}.  arXiv:2102.11899v2.

\bibitem{Kf} F.P. Kelly, \textit{Stochastic models of computer communication systems}. With discussion, J. Roy.
Stat. Soc. Ser. B \textbf{47} (1985), 379-395.

\bibitem{K} H. Kesten, \textit{Quadratic transformations: A model for population growth}. I. Advances in Appl. Probability, \textbf{2} (1970), 1-82.

\bibitem{RM} R.M. Khakimov, M.T. Makhammadaliev, \textit{Uniqueness and nonuniqueness conditions for weakly periodic Gibbs measures for the Hard-Core model}. Theor. Math. Phys. \textbf{204}(2) (2020), 1059-1078.

\bibitem{KS} C. K\"ulske and P. Schriever, \textit{ Gradient Gibbs measures and fuzzy transformations on trees}, Markov Process. Relat. Fields, \textbf{23}, (2017), 553-590.

\bibitem{Mar} F. Martinelli, A. Sinclair, D. Weitz, \textit{Fast mixing for independent sets, coloring and other models on trees}. Random Structures and Algoritms.  \textbf{31}, (2007), 134-172. 

\bibitem{Maz} A.E. Mazel, Yu.M. Suhov, \textit{ Random surfaces with two-sided constraints: an application of the theory of dominant ground states}.  J. Statist. Phys. \textbf{64}, (1991),  111--134.

\bibitem{Pr} C. J. Preston, \textit{Gibbs States on countable sets.} Cambridge Tracts Math. \textbf{68}.  1974.  

\bibitem{RKhM} U.A. Rozikov, R.M. Khakimov and M.T. Makhammadaliev, \textit{Periodic Gibbs measures for a two-state HC-Model on a Cayley Tree}. Contemporary Mathematics. Fundamental Directions. \textbf{68}(1) (2022), 95--109.

\bibitem{R} U.A. Rozikov,  \textit{Gibbs measures on Cayley trees}. World Scientific. 2013.

\bibitem{Sh} A.N. Shiryayev, \textit{Probability} [in Russian], Nauka, Moscow (1989).

\bibitem{Si} Ya.G. Sinai, \textit{Theory of Phase Transitions: Rigorous Results}. Intl. Series Nat. Philos., Vol. 108, Pergamon, Oxford (1982).

\bibitem{Ve} Y. Velenik, \textit{Localization and delocalization of random interfaces}. Probab. Surv. \textbf{3}, (2006),  112–169.

\bibitem{Z1} S. Zachary, \textit{Countable state space Markov random fields and Markov chains on trees}.  Ann. Probab. \textbf{11}(4) (1983), 894--903.

\bibitem{Z}  Ye. Zichun, \textit{Models of gradient type with sub-quadratic actions}.  J. Math. Phys. \textbf{60}, 073304 (2019) doi: 10.1063/1.5046860.



\end{thebibliography}
\end{document}